\newcolumntype{R}[1]{>{\RaggedLeft\hspace{0pt}}p{#1}}
\newcolumntype{C}[1]{>{\centering\arraybackslash}p{#1}}
\definecolor{dark-gray}{gray}{0.2}
\tikzset{
  treenode/.style = {align=center, inner sep=0pt, text centered,
    font=\sffamily},
  arn_n/.style = {treenode, circle, white, font=\sffamily\bfseries, draw=black,
    fill=black, text width=1.5em},%
  arn_r/.style = {treenode, circle, black, draw=black, dashed, 
    text width=1.5em, very thick},%
  arn_x/.style = {treenode, rectangle, draw=black,
    minimum width=0.5em, minimum height=0.5em}%
}
\newcommand{\problemdef}[3]{%
  \begin{minipage}{.9\linewidth}
    \vspace{5pt}
    \begin{compactdesc}%
    \item[\normalfont\textsc{#1}]
    \item[\emph{Input:}]#2
    \item[\emph{Task:}]#3
    \end{compactdesc}%
    \vspace{5pt}
  \end{minipage}
}
\newcommand{\sqin}{%
  \mathrel{\vphantom{\sqsubset}\text{%
    \mathsurround=0pt
    \ooalign{$\sqsubset$\cr$-$\cr}%
  }}%
}
\newcommand{\citeVLM}{\citet{viard2015computing}}
\newcommand{\dsd}{$\Delta$\nobreakdash-slice degeneracy}
\newcommand{\dsdm}{$\Delta$\nobreakdash-slice degeneracies}
\newcommand{\Dsd}{$\Delta$\nobreakdash-Slice Degeneracy}
\newcommand{\dC}{$\Delta$\nobreakdash-clique}
\newcommand{\bkd}{\textsc{BronKerboschDelta}}
\newcommand{\bkdp}{\textsc{BronKerboschDeltaPivot}}
\newcommand{\dclq}{$\Delta$\nobreakdash-clique}
\newcommand{\tmpg}{\ensuremath{\mathbb{G}}}
\newtheorem{theorem}{Theorem}
\newtheorem{lemma}{Lemma}
\newtheorem{corollary}{Corollary}
\newtheorem{proposition}{Proposition}
\newtheorem{definition}{Definition}
\newtheorem{example}{Example}
\begin{document}

\title{Adapting the Bron-Kerbosch Algorithm for Enumerating Maximal Cliques in Temporal Graphs\footnote{A preliminary version of this article appeared in the Proceedings of the 2016 IEEE/ACM International Conference on Advances in Social Networks Analysis and Mining~\cite{HMNS16}. Parts of this work are based on the first author's Bachelor thesis at TU~Berlin~\cite{Him16}.}}
%


\author[1]{Anne-Sophie Himmel}

\affil[1]{Institut f\"ur Softwaretechnik und Theoretische Informatik,
 TU Berlin, Germany\\
 \texttt{anne-sophie.himmel@campus.tu-berlin.de, \{h.molter,~rolf.niedermeier\}@tu-berlin.de}}

\author[1]{Hendrik Molter}

\author[1]{Rolf Niedermeier}

\author[2]{Manuel Sorge}

\affil[2]{Ben Gurion University of the Negev, Department of Industrial Engineering and Management, Be'er Sheva, Israel\\ \texttt{sorge@post.bgu.ac.il}}

\maketitle

\begin{abstract}
Dynamics of interactions play an increasingly important role in the analysis of complex networks. A modeling framework to capture this are temporal graphs which consist of a set of vertices (entities in the network) and a set of time-stamped binary interactions between the vertices.  We focus on enumerating $\Delta$-cliques, an extension of the concept of cliques to temporal graphs: for a given time period~$\Delta$, a $\Delta$-clique in a temporal graph is a set of vertices and a time interval such that all vertices interact with each other at least after every~$\Delta$~time steps within the time interval. 
Viard, Latapy, and Magnien~[ASONAM~2015, TCS~2016] proposed a greedy algorithm for enumerating all maximal $\Delta$-cliques in temporal graphs. In contrast to this approach, we adapt the Bron-Kerbosch algorithm---an efficient, recursive backtracking algorithm which enumerates all maximal cliques in static graphs---to the temporal 
setting. We obtain encouraging results both in theory (concerning worst-case running time analysis
based on the parameter ``$\Delta$-slice degeneracy'' 
of the underlying graph) as well
as in practice\footnote{Code freely available at \url{http://fpt.akt.tu-berlin.de/temporalcliques/} (GNU General Public License).} with experiments on real-world data.
The latter culminates in an improvement for most interesting $\Delta$-values concerning running time in comparison with the algorithm of Viard, Latapy, and Magnien. 
%
%
\end{abstract}

\section{Introduction}

Network analysis is one of the main pillars of data science.
Focusing on networks that are modeled by undirected graphs,
a fundamental primitive is the identification of complete
subgraphs, that is, cliques. This is particularly true in 
the context of detecting 
communities in social networks. 
Finding a maximum-cardinality clique in a graph is a classical 
NP-hard problem, so super-polynomial worst-case running time seems unavoidable.
Moreover, often one wants to solve the more general task of not only finding 
one maximum-cardinality clique but to list \emph{all maximal} cliques.
Their number can be exponential in the graph size.
The famous Bron-Kerbosch algorithm (``Algorithm~457'' in 
\emph{Communications of the ACM~1973},~\cite{bron1973algorithm}) addresses this task and still today forms 
the basis for the best (practical) algorithms to enumerate all maximal cliques 
in  static graphs~\cite{ELS13}.
However, to realistically model many real-world phenomena in social 
and other network structures, 
one has to take into account the dynamics of the modeled system of 
interactions between entities, leading to so-called 
temporal networks. In a nutshell, compared to the standard 
static networks, the interactions in temporal networks (edges) 
appear sporadically over time (while the vertex set remains static).
Indeed, as~\citet{nicosia2013graph} pointed out, in many real-world 
systems the interactions among entities are rarely persistent over time and the non-temporal interpretation is an ``oversimplifying approximation''.
In this work, we use the standard model of temporal graphs. A temporal graph consists of a vertex set and a set of edges, each with an integer time-stamp. The generalization of a clique to the temporal setting that we study is called \dclq\ and was introduced by \citet{Viard2015Dyno,viard2015computing}. Intuitively, being in a \dclq\ means to be regularly in contact with all other entities in this \dclq. In slightly more formal terms, each pair of vertices in the \dclq\ has to be in contact in at least every $\Delta$ time steps. A fully formal definition is given in Section~\ref{sec:preliminaries}. We present an adaption of the
framework of Bron and Kerbosch to temporal graphs. 
To this end, we overcome several conceptual hurdles and propose 
a temporal version of the Bron-Kerbosch algorithm as a new standard for 
efficient enumeration of maximal \dclq s in temporal graphs.

\subsection{Related Work}
Our work relates to two main lines of research.
First, enumerating $\Delta$-cliques in temporal graphs generalizes the enumeration of maximal cliques
in static graphs, this being  subject of 
many different algorithmic approaches 
(sometimes also 
exploiting specific properties such as the ``degree of isolation'' of the cliques 
searched 
for)~\cite{bron1973algorithm,ELS13,II09,HKMN09,komusiewicz2009isolation,
tomita2006worst}. Indeed, clique finding is a special case of dense subgraph 
detection. 
Second, more recently, mining dynamic or temporal networks for 
periodic interactions~\cite{LB10} or preserving structures~\cite{uno2015mining}
(in particular, this may include cliques as a very fundamental pattern)
has gained increased attention.
Our work is directly motivated by the study of \citet{Viard2015Dyno,viard2015computing} who introduced the concept of $\Delta$-cliques
and provided a corresponding enumeration algorithm for $\Delta$-cliques.
In fact, following one of their concluding remarks on future 
research possibilities, we adapt the Bron-Kerbosch
algorithm to the temporal setting, thereby outperforming their 
greedy-based approach in most cases.

\subsection{Results and Organization}
Our main contribution is to adapt the Bron-Kerbosch 
recursive backtracking 
algorithm for clique enumeration in static graphs to temporal graphs.
In this way, we achieve a significant speedup for most interesting time period values~$\Delta$ (typically two orders 
of magnitude of speedup)
when compared to a previous algorithm due to \citet{Viard2015Dyno,viard2015computing} which is based on a greedy 
approach. We also provide a theoretical running time analysis of our 
Bron-Kerbosch adaption employing the framework of parameterized complexity 
analysis. The analysis is based on the 
parameter ``$\Delta$-slice degeneracy'' which we introduce, an adaption of the degeneracy parameter that is frequently used in static graphs as a measure for sparsity. This extends results 
concerning the static Bron-Kerbosch 
algorithm~\cite{ELS13}.
A particular feature to achieve high efficiency of the standard 
Bron-Kerbosch algorithm is the use of pivoting, a procedure to reduce the number of recursive calls of the Bron-Kerbosch algorithm. We show how to 
define this and make it work in the temporal setting, where it becomes
a significantly more delicate issue than in the static case.
In summary, we propose our temporal version of the Bron-Kerbosch approach as a 
current standard for enumerating maximal cliques in temporal graphs.

  The paper is organized as follows. In Section~\ref{sec:preliminaries}, we introduce all main definitions and notations. In addition, we give a description of the original Bron-Kerbosch algorithm as well as two extensions: pivoting and degeneracy ordering. In Section~\ref{section:bronKerboschDelta}, we propose an adaption of the Bron-Kerbosch algorithm to enumerate all maximal $\Delta$-cliques in a temporal graph, prove the correctness of the algorithm and give a running time upper bound. Furthermore, we adapt the idea of pivoting to the temporal setting. In Section~\ref{sec:degeneracy} we adapt the concept of degeneracy to the temporal setting and give an improved running time bound for enumerating all maximal $\Delta$-cliques. In Section~\ref{section:implExp}, we present the main results of the experiments on real-world data sets. We measure the $\Delta$-slice degeneracy of real-world temporal graphs, we study the efficiency of our algorithm, and compare its running time to the algorithm of \citet{Viard2015Dyno},
showing a significant performance increase due to our Bron-Kerbosch 
approach. 
We conclude in Section~\ref{sec:conclusion}, also presenting directions for future research.

\section{Preliminaries} 
\label{sec:preliminaries}
 In this section we introduce the most important notations and definitions used throughout this article. 
\subsection{Graph-Theoretic Concepts}
In the following we provide definitions of adaptations to the temporal setting for central graph-theoretic concepts.
\subsubsection{Temporal Graphs} 
  The motivation behind temporal graphs, which are also referred to as temporal networks~\cite{holme2012temporal}, time-varying graphs~\cite{nicosia2013graph}, or link streams~\cite{Viard2015Dyno}, is to capture changes in a graph that occur over time. In this work, we use the well-established model where each edge is given a time stamp~\cite{Viard2015Dyno, holme2012temporal, boccaletti2014structure}. Assuming discrete time steps, this is equivalent to a sequence of static graphs over a fixed set of vertices~\cite{michail2015introduction,Erlebach0K15%
}. Formally, the model is defined as follows.
\begin{definition}[Temporal Graph]
 A \emph{temporal graph}~$\mathbb{G}=(V,E,T)$ is defined as a triple consisting of a set of vertices~$V$, a set of \emph{time-edges}~$E \subseteq \binom{V}{2} \times T$, and a time interval~$T=[\alpha, \omega]$, where~$\alpha, \omega \in \mathbb{N}$,~$T \subseteq \mathbb{N}$ and~$\omega -\alpha$ is the \emph{lifetime} of the temporal graph~$\mathbb{G}$. 
 \end{definition} 
 The notation~$\binom{V}{2}$ describes the set of all possible undirected edges~$\{v_1,v_2\}$ with~$v_1 \not = v_2$ and~$v_1,v_2 \in V$. A time-edge~$e=(\{v_1,v_2\}, t) \in E$ can be interpreted as an interaction between~$v_1$ and~$v_2$ at time~$t$. Note that we will restrict our attention to discretized time, implying that changes only occur at discrete points in time. This seems close to a natural abstraction of real-world dynamic systems and ``gives the problems a purely combinatorial flavor''~\cite{Michail2014}.
\subsubsection{$\Delta$-Cliques}
A straightforward adaptation of a clique to the temporal setting is to additionally assign a lifetime~$I = [a, b]$ to it, that is, the largest time interval such that the clique exists in each time step in said interval. However, this model is often too restrictive for real-world data. For example, if the subject matter of examination is e-mail traffic and the data set includes e-mails with time stamps including seconds, we are not interested in people who sent e-mails to each other every second over a certain time interval, but we would like to know which groups of people were in contact with each other, say, at least every seven days over months. One possible approach would be to generalize the time stamps, taking into account only the week an e-mail was sent, resulting in a loss of accuracy in the data set. The constraint of each pair of vertices being connected in each time step can be relaxed by introducing an additional parameter~$\Delta$, quantifying how many time steps may be skipped between two connections of any vertex pair. These so-called~$\Delta$-cliques were introduced by \citet{Viard2015Dyno,viard2015computing} and are formally defined as follows.
\begin{definition}[$\Delta$-Clique]
Let~$\Delta \in \mathbb{N}$. A \emph{$\Delta$-clique} in a temporal graph~$\mathbb{G}=(V,E,T)$ is a tuple~$C=(X, I=[a,b])$ with $X \subseteq V$, $b-a\geq \Delta$, and~$I\subseteq T$ such that for all~$\tau \in [a, b-\Delta]$ and for all~$v,w\in X$ with~$v\not =w$ there exists a~$(\{v,w\},t) \in E$ with~$t \in [\tau, \tau + \Delta]$. 
\end{definition}
 In other words, for a~$\Delta$-clique~$C=(X,I)$ all pairs of vertices in~$X$ interact with each other at least after every~$\Delta$ time steps during the time interval~$I$. We implicitly exclude~$\Delta$-cliques with time intervals smaller than~$\Delta$. 
 
 It is evident that the parameter~$\Delta$ is a measurement of the intensity of interactions in~$\Delta$-cliques. Small~$\Delta$-values imply that the interaction between vertices in a~$\Delta$-clique has to be more frequent than in the case of large~$\Delta$-values. The choice of~$\Delta$ depends on the data set and the purpose of the analysis. 

 We can also consider~$\Delta$-cliques from another point of view. For a given temporal graph~$\mathbb{G}=(T,V,E)$ and a~$\Delta \in \mathbb{N}$, the static graph~$G^{\Delta}_{\tau} = (V_{\tau},E_{\tau})$ describes all contacts that appear within the~$\Delta$-sized time window~$[\tau, \tau+\Delta]$ with~$\tau \in [\alpha,\omega-\Delta]$ in the temporal graph~$\mathbb{G}$, that is~$V_{\tau} = V$ and for every~$\{v_1,v_2\} \in E_{\tau}$ there is a time step~$t \in [\tau, \tau + \Delta]$ such that~$(\{v_1,v_2\},t) \in E$. The existence of a~$\Delta$-clique~$C=(X, I=[a,b])$ indicates that all vertices in~$X$  form a clique in all static graphs~$G^{\Delta}_{\tau}$ with~$\tau \in [a,b-\Delta]$. This implies that all vertices in~$X$ are pairwise connected to each other in the static graphs of all sliding,~$\Delta$-sized time windows from time~$a$ until~$b - \Delta$. 

 By setting~$\Delta$ to the length of the whole lifetime of the temporal graph, every~$\Delta$-clique corresponds to a normal clique in the underlying static graph that results from ignoring the time stamps of the time-edges. %

We are most interested in \dC s that are not contained in any other \dC. For this we also need to adapt the notion of maximality to the temporal setting~\cite{Viard2015Dyno,viard2015computing}. Let \tmpg\ be a temporal graph. We call a $\Delta$-clique~$C=(X, I)$ in \tmpg\ \emph{vertex-maximal} if we cannot add any vertex to~$X$ without having to decrease the clique's lifetime~$I$. That is, there is no \dclq~$C' = (X', I')$ in \tmpg\ with $I \subseteq I'$ and $X \subsetneq X'$. We say that a $\Delta$-clique is \emph{time-maximal} if we cannot increase the lifetime~$I$ without having to remove vertices from~$X$. That is, there is no \dclq~$C' = (X', I')$ in \tmpg\ with $I \subsetneq I'$ and $X \subseteq X'$. We call a $\Delta$-clique \emph{maximal} if it is both vertex-maximal and time-maximal.

\subsubsection{$\Delta$-Neighborhood, $\Delta$-Cut, and other Temporal Graph Concepts}
In this section, we introduce and define further graph theoretical concepts that need to be adapted to the temporal setting.

We refer to a tuple~$(v, I=[a, b])$ with~$v \in V$ and~$I \subseteq T$ as a \emph{vertex-interval pair} of a temporal graph. We call $a$ the \emph{starting point} of interval~$I$ and $b$ the \emph{endpoint} of interval $I$. Let~$X$ be a set of vertex-interval pairs. 
The modified element relation~$(v,I) \sqin X$ (\emph{temporal membership}) expresses that there exists a vertex-interval pair~$(v,I') \in X$ with~$I \subseteq I'$.

Using these definitions, we can adapt the notion of a neighborhood of a vertex to temporal graphs. Intuitively, we want that two vertex-interval pairs are neighbors if they can be put into a \dC\ together.
\begin{definition}[$\Delta$-Neighborhood]
For a vertex~$v \in V$ and a time interval~$I \subseteq T$ in a temporal graph, the \emph{$\Delta$-neighborhood}~$N^{\Delta}(v,I)$ is the set of all vertex-interval pairs~$(w,I'=[a',b'])$ with the property that for every~$\tau \in [a',b'-\Delta]$ at least one edge~$(\{v,w\},t) \in E$ with~$t \in [\tau, \tau + \Delta]$ exists. Furthermore,~$b'-a' \geq \Delta$, $I' \subseteq I$, and~$I'$ is maximal, that is, there is no time interval~$I''\subseteq I$ with~$I' \subset I''$ satisfying the properties above.
\end{definition}
Notice that being a $\Delta$-neighbor of another vertex is a symmetric relation. If~$(w, I') \sqin N^{\Delta}(v,I)$, then we say that~$w$ is a v\emph{$\Delta$-neighbor} of~$v$ during the time interval~$I'$. In Figure~\ref{figure:TemporalGraph}, we visualize the concepts of $\Delta$-neighborhood and $\Delta$-clique in a temporal graph. See also Example~\ref{example:temporalGraph} below.

We need to define a suitable way of intersecting of two sets of vertex-interval pairs, so that, as the intuition suggests, a $\Delta$-clique is just the intersection of the ``closed'' $\Delta$-neighborhoods\footnote{In static graphs, the closed neighborhood of a vertex includes the vertex itself.} of its elements over the lifetime of the clique.
\begin{definition}[$\Delta$-Cut]
  Let~$X$ and~$Y$ be two sets of vertex-interval pairs. The \emph{$\Delta$-cut}~$X\sqcap Y$ contains for each vertex, all intersections of intervals in~$X$ and $Y$ that are of size at least~$\Delta$. More precisely, $$X \sqcap Y = \{ (v, I \cap I') \mid (v, I)\in X \wedge (v, I')\in Y \wedge |I\cap I'|\ge\Delta\}.$$
\end{definition}
In other words, the $\Delta$-cut~$X\sqcap Y$ contains all vertex-interval pairs~$(v,I)$ such that $(v, I)\sqin X$ and~$(v, I) \sqin Y$, as well as~$|I| \geq \Delta$, and~$I$ is maximal under these properties. That is, there is no~$J$ with~$I \subsetneq J$ and~$J \subseteq I'$ and~$J \subseteq I''$ such that~$(v,I') \sqin X$ and~$(v,I'') \sqin Y$ for some~$I'$ and~$I''$.

\begin{figure}[!t]
\begin{center}
\subfigure[$N^{\Delta}(a,T)$]{
        \begin{tikzpicture}[thick]
    \tikzstyle{phase} = [fill,shape=circle,minimum size=5pt,inner sep=0pt]
    \node (time1) at (0,0.5) {0};
    \node (time2) at (1,0.5) {1};
    \node (time3) at (2,0.5) {2};
    \node (time4) at (3,0.5) {3};
    \node (time5) at (4,0.5) {4};
    \node (time6) at (5,0.5) {5};
    \node (time1) at (6,0.5) {6};
    \node (time1) at (7,0.5) {7};
    \node (time1) at (8,0.5) {8};
    \node at (-1,0) (q1) {a};
    \node at (-1,-1) (q2) {b};
    \node at (-1,-2) (q3) {c};
    \node (node0a) at (0,0) {};
    \node (node0b) at (0,-1) {};
    \node (node0c) at (0,-2) {};
    \node[phase] (node2a) at (2,0) {};
    \node[phase] (node2b) at (2,-1) {};
    \node (node2c) at (3,-1) {};
    \node (node2c) at (2,-2) {};
    \draw[-] (node2a) to[out=-60,in=60] (node2b);
    \node[phase] (node3a) at (3,0) {};
    \node[phase] (node3b) at (3,-1) {};
    \draw[-] (node3a) to[out=-60,in=60] (node3b);
    \node[phase] (node4a) at (4,0) {};
    \node[phase] (node4c) at (4,-2) {};
    \draw[-] (node4a) to[out=-60,in=60] (node4c);
    \node (node5a) at (5,0) {};
    \node[phase] (node5b) at (5,-1) {};
    \node[phase] (node5c) at (5,-2) {};
    \draw[-] (node5b) to[out=-60,in=60] (node5c);
    \node[phase] (node6a) at (6,0) {};
    \node[phase] (node6c) at (6,-2) {};
    \draw[-] (node6a) to[out=-60,in=60] (node6c);
    \node (node7a) at (7,0) {};
    \node (node7b) at (7,-1) {};
    \node (node7c) at (7,-2) {};
    \node (node8a) at (8,0) {};
    \node (node8b) at (8,-1) {};
    \node (node8c) at (8,-2) {};
    \node (end1) at (9,0) {} edge [-] (q1);
    \node (end2) at (9,-1) {} edge [-] (q2);
    \node (end3) at (9,-2) {} edge [-] (q3);
    \begin{pgfonlayer} {background}
    \node (background) [fill=green, fill opacity=0.5, fit = (node2c) (node8c), postaction={pattern=north east lines, pattern color=gray}] {};
    \node (background) [fill=green, fill opacity=0.5, fit = (node4a) (node4c), postaction={pattern=north east lines, pattern color=gray}] {};
    \node (background) [fill=green, fill opacity=0.5, fit = (node6a) (node6c), postaction={pattern=north east lines, pattern color=gray}] {};
    \node (background) [fill=yellow , fill opacity=0.5, fit = (node0b) (node5b)] {};
    \node (background) [fill=yellow , fill opacity=0.5, fit = (node2a) (node2b)] {};
    \node (background) [fill=yellow , fill opacity=0.5, fit = (node3a) (node3b)] {};
    \end{pgfonlayer}
    \end{tikzpicture}
    \label{figure:DeltaNeighborhoodA}
}
\subfigure[$N^{\Delta}(b,T)$]{
 \begin{tikzpicture}[thick]
    \tikzstyle{phase} = [fill,shape=circle,minimum size=5pt,inner sep=0pt]
    \node (time1) at (0,0.5) {0};
    \node (time2) at (1,0.5) {1};
    \node (time3) at (2,0.5) {2};
    \node (time4) at (3,0.5) {3};
    \node (time5) at (4,0.5) {4};
    \node (time6) at (5,0.5) {5};
    \node (time1) at (6,0.5) {6};
    \node (time1) at (7,0.5) {7};
    \node (time1) at (8,0.5) {8};
    \node at (-1,0) (q1) {a};
    \node at (-1,-1) (q2) {b};
    \node at (-1,-2) (q3) {c};
    \node (node0a) at (0,0) {};
    \node (node0b) at (0,-1) {};
    \node (node0c) at (0,-2) {};
    \node[phase] (node2a) at (2,0) {};
    \node[phase] (node2b) at (2,-1) {};
    \node (node2c) at (3,-1) {};
    \node (node2c) at (2,-2) {};
    \draw[-] (node2a) to[out=-60,in=60] (node2b);
    \node[phase] (node3a) at (3,0) {};
    \node[phase] (node3b) at (3,-1) {};
    \node (node3c) at (3,-2) {};
    \draw[-] (node3a) to[out=-60,in=60] (node3b);
    \node[phase] (node4a) at (4,0) {};
    \node[phase] (node4c) at (4,-2) {};
    \draw[-] (node4a) to[out=-60,in=60] (node4c);
    \node (node5a) at (5,0) {};
    \node[phase] (node5b) at (5,-1) {};
    \node[phase] (node5c) at (5,-2) {};
    \draw[-] (node5b) to[out=-60,in=60] (node5c);
    \node[phase] (node6a) at (6,0) {};
    \node[phase] (node6c) at (6,-2) {};
    \draw[-] (node6a) to[out=-60,in=60] (node6c);
    \node (node7a) at (7,0) {};
    \node (node7b) at (7,-1) {};
    \node (node7c) at (7,-2) {};
    \node (node8a) at (8,0) {};
    \node (node8b) at (8,-1) {};
    \node (node8c) at (8,-2) {};
    \node (end1) at (9,0) {} edge [-] (q1);
    \node (end2) at (9,-1) {} edge [-] (q2);
    \node (end3) at (9,-2) {} edge [-] (q3);
    \begin{pgfonlayer} {background}
    \node (background) [fill=green, fill opacity=0.5, fit = (node3c) (node7c), postaction={pattern=north east lines, pattern color=gray}] {};
    \node (background) [fill=green, fill opacity=0.5, fit = (node5b) (node5c), postaction={pattern=north east lines, pattern color=gray}] {};
    \node (background) [fill=yellow , fill opacity=0.5, fit = (node0a) (node5a)] {};
    \node (background) [fill=yellow , fill opacity=0.5, fit = (node2a) (node2b)] {};
    \node (background) [fill=yellow , fill opacity=0.5, fit = (node3a) (node3b)] {};
    \end{pgfonlayer}
    \end{tikzpicture}
    \label{figure:DeltaNeighborhoodB}
}\\
\subfigure[$N^{\Delta}(c,T)$]{
 \begin{tikzpicture}[thick]
    \tikzstyle{phase} = [fill,shape=circle,minimum size=5pt,inner sep=0pt]
    \node (time1) at (0,0.5) {0};
    \node (time2) at (1,0.5) {1};
    \node (time3) at (2,0.5) {2};
    \node (time4) at (3,0.5) {3};
    \node (time5) at (4,0.5) {4};
    \node (time6) at (5,0.5) {5};
    \node (time1) at (6,0.5) {6};
    \node (time1) at (7,0.5) {7};
    \node (time1) at (8,0.5) {8};
    \node at (-1,0) (q1) {a};
    \node at (-1,-1) (q2) {b};
    \node at (-1,-2) (q3) {c};
    \node (node0a) at (0,0) {};
    \node (node0b) at (0,-1) {};
    \node (node0c) at (0,-2) {};
    \node[phase] (node2a) at (2,0) {};
    \node[phase] (node2b) at (2,-1) {};
    \node (node2c) at (3,-1) {};
    \node (node2c) at (2,-2) {};
    \draw[-] (node2a) to[out=-60,in=60] (node2b);
    \node[phase] (node3a) at (3,0) {};
    \node[phase] (node3b) at (3,-1) {};
    \node (node3c) at (3,-2) {};
    \draw[-] (node3a) to[out=-60,in=60] (node3b);
    \node[phase] (node4a) at (4,0) {};
    \node[phase] (node4c) at (4,-2) {};
    \draw[-] (node4a) to[out=-60,in=60] (node4c);
    \node (node5a) at (5,0) {};
    \node[phase] (node5b) at (5,-1) {};
    \node[phase] (node5c) at (5,-2) {};
    \draw[-] (node5b) to[out=-60,in=60] (node5c);
    \node[phase] (node6a) at (6,0) {};
    \node[phase] (node6c) at (6,-2) {};
    \draw[-] (node6a) to[out=-60,in=60] (node6c);
    \node (node7a) at (7,0) {};
    \node (node7b) at (7,-1) {};
    \node (node7c) at (7,-2) {};
    \node (node8a) at (8,0) {};
    \node (node8b) at (8,-1) {};
    \node (node8c) at (8,-2) {};
    \node (end1) at (9,0) {} edge [-] (q1);
    \node (end2) at (9,-1) {} edge [-] (q2);
    \node (end3) at (9,-2) {} edge [-] (q3);
    \begin{pgfonlayer} {background}
    \node (background) [fill=green, fill opacity=0.5, fit = (node3b) (node7b), postaction={pattern=north east lines, pattern color=gray}] {};
    \node (background) [fill=green, fill opacity=0.5, fit = (node5b) (node5c), postaction={pattern=north east lines, pattern color=gray}] {};
    \node (background) [fill=yellow , fill opacity=0.5, fit = (node2a) (node8a)] {};
    \node (background) [fill=yellow , fill opacity=0.5, fit = (node4a) (node4c)] {};
    \node (background) [fill=yellow , fill opacity=0.5, fit = (node6a) (node6c)] {};
    \end{pgfonlayer}
    \end{tikzpicture}
    \label{figure:DeltaNeighborhoodC}
}
\subfigure[Maximal~$\Delta$-Clique~$(\{a,b,c\} \mathbin{,} \lbrack 3 \mathbin{,} 5 \rbrack)$]{
    \begin{tikzpicture}[thick]
    \tikzstyle{phase} = [fill,shape=circle,minimum size=5pt,inner sep=0pt]
    \node (time1) at (0,0.5) {0};
    \node (time2) at (1,0.5) {1};
    \node (time3) at (2,0.5) {2};
    \node (time4) at (3,0.5) {3};
    \node (time5) at (4,0.5) {4};
    \node (time6) at (5,0.5) {5};
    \node (time1) at (6,0.5) {6};
    \node (time1) at (7,0.5) {7};
    \node (time1) at (8,0.5) {8};
    \node at (-1,0) (q1) {a};
    \node at (-1,-1) (q2) {b};
    \node at (-1,-2) (q3) {c};
    \node (node0a) at (0,0) {};
    \node (node0b) at (0,-1) {};
    \node (node0c) at (0,-2) {};
    \node[phase] (node2a) at (2,0) {};
    \node[phase] (node2b) at (2,-1) {};
    \node (node2c) at (3,-1) {};
    \node (node2c) at (2,-2) {};
    \draw[-] (node2a) to[out=-60,in=60] (node2b);
    \node[phase] (node3a) at (3,0) {};
    \node[phase] (node3b) at (3,-1) {};
    \node (node3c) at (3,-2) {};
    \draw[-] (node3a) to[out=-60,in=60] (node3b);
    \node[phase] (node4a) at (4,0) {};
    \node[phase] (node4c) at (4,-2) {};
    \draw[-] (node4a) to[out=-60,in=60] (node4c);
    \node (node5a) at (5,0) {};
    \node[phase] (node5b) at (5,-1) {};
    \node[phase] (node5c) at (5,-2) {};
    \draw[-] (node5b) to[out=-60,in=60] (node5c);
    \node[phase] (node6a) at (6,0) {};
    \node[phase] (node6c) at (6,-2) {};
    \draw[-] (node6a) to[out=-60,in=60] (node6c);
    \node (node7a) at (7,0) {};
    \node (node7b) at (7,-1) {};
    \node (node7c) at (7,-2) {};
    \node (node8a) at (8,0) {};
    \node (node8b) at (8,-1) {};
    \node (node8c) at (8,-2) {};
    \node (end1) at (9,0) {} edge [-] (q1);
    \node (end2) at (9,-1) {} edge [-] (q2);
    \node (end3) at (9,-2) {} edge [-] (q3);
    \begin{pgfonlayer} {background}
    \node (background) [fill=yellow , fill opacity=0.5, fit = (node3a) (node5c)] {};
    \end{pgfonlayer}
    \end{tikzpicture}
   \label{figure:DeltaClique}
}
  \caption{$\Delta$-Neighborhoods and a~$\Delta$-clique of a temporal graph with~$\Delta = 2$. The lifetime of the graph is~$T=[0,8]$. The elements of the $\Delta$-neighborhoods in (a), (b), and (c) are shaded in yellow and green (hatched), respectively. A maximal $\Delta$-clique (d) is shaded in yellow.}
    \label{figure:TemporalGraph}
\end{center}
\end{figure}
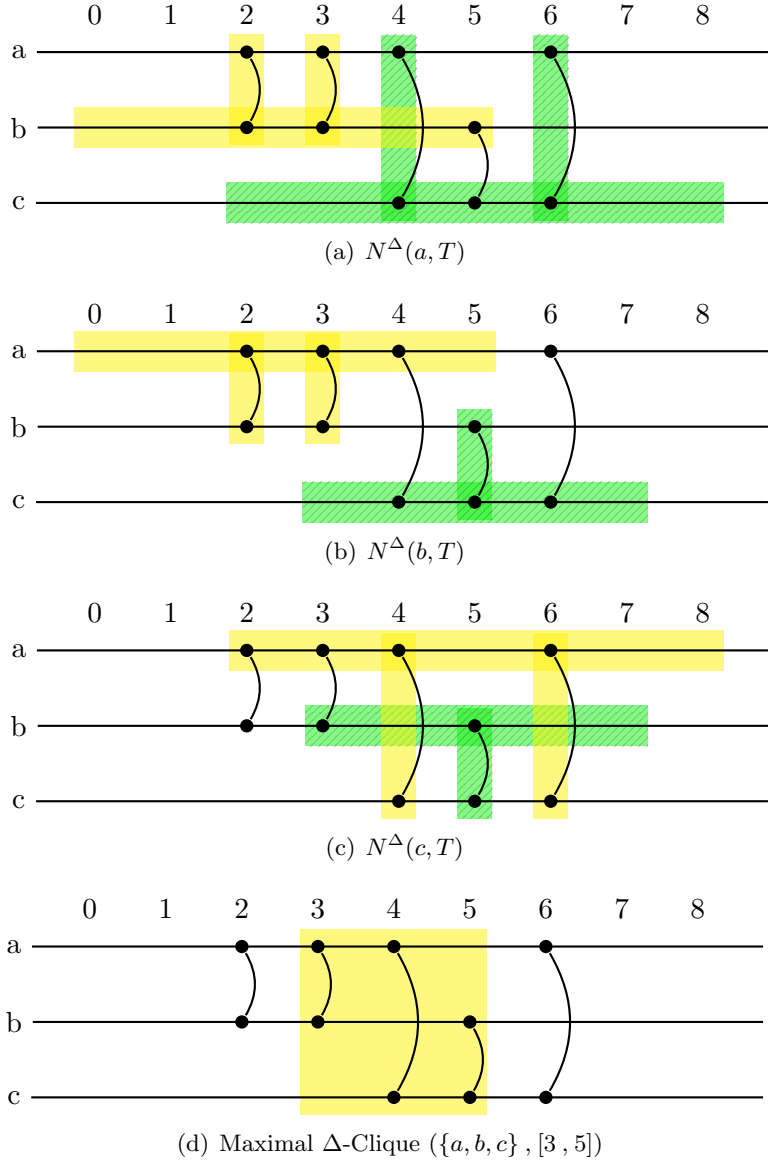
\begin{example}
\label{example:temporalGraph}
In Figure \ref{figure:TemporalGraph} we visualize a temporal graph and the concepts of $\Delta$-neighborhood and $\Delta$-clique. We consider a temporal graph $\mathbb{G} = (T, V, E)$ with $T = [0,8]$, $V = \{a,b,c\}$, $E=\{(\{a,b\},2), (\{a,b\},3), (\{a,c\},4), (\{b,c\},5),$ $(\{a,c\},6)\}$, and $\Delta = 2$. The vertices are visualized as horizontal lines. The connections between two vertices at a specific time step represent the time-edges of the temporal graph.

We visualize the $\Delta$-neighborhood of each vertex of the temporal graph over the whole time interval $T$ in Figures~\ref{figure:DeltaNeighborhoodA}-\ref{figure:DeltaNeighborhoodC}:
\begin{itemize}
\item In Figure~\ref{figure:DeltaNeighborhoodA}, we consider the $\Delta$-neighborhood $N^{\Delta}(a,T)$ of vertex $a$ during the whole time interval $T$. The yellow shaded bar marks the vertex-interval pair $(b,[0,5]) \in N^{\Delta}(a,T)$. The vertex $b$ is a $\Delta$-neighbor of $a$ during $[0,5]$ because for every $\tau \in [0,5-\Delta =3]$ at least one time-edge~$(\{a,b\},t) \in E$ with $t \in [\tau, \tau + \Delta]$ exists since $(\{a,b\},2), (\{a,b\},3) \in E$. The same holds for the vertex-interval pair $(c,[2,8]) \in  N^{\Delta}(a,T)$ which is marked in hatched green.

\item In Figure~\ref{figure:DeltaNeighborhoodB}, we visualize the $\Delta$-neighborhood $N^{\Delta}(b,T)$ of $b$ over the whole lifetime $T$ of the temporal graph. The vertex-interval pair $(c,[3,7]) \in N^{\Delta}(b,T)$ is marked in hatched green. The vertex-interval pair $(a,[0,5]) \in N^{\Delta}(b,T)$ is shaded in yellow. It becomes evident that being a $\Delta$-neighbor of another vertex is a symmetric relation---if $a$ is a $\Delta$-neighbor of $b$ during $[0,5]$, then $b$ is also a $\Delta$-neighbor of $a$ during~$[0,5]$.

\item In Figure~\ref{figure:DeltaNeighborhoodC}, we visualize the $\Delta$-neighborhood $N^{\Delta}(c,T)$ of $c$ over the whole lifetime $T$ of the temporal graph. The vertex-interval pair $(b,[3,7]) \in N^{\Delta}(c,T)$ is marked in hatched green. The vertex-interval pair $(a,[2,8]) \in N^{\Delta}(c,T)$ is shaded in yellow.
\end{itemize}
Figure \ref{figure:DeltaClique} shows the maximal $\Delta$-clique $(\{a,b,c\},[3,5])$. During the time interval $[3,5]$, $a$ and $b$ are $\Delta$-neighbors, $b$ and $c$ are $\Delta$-neighbors and $a$ and~$c$ are $\Delta$-neighbors, see Figures~\ref{figure:DeltaNeighborhoodA}-\ref{figure:DeltaNeighborhoodC}. We cannot increase the time interval because at time step $2$ the vertices $b$ and $c$ are not yet $\Delta$-neighbors and at time step $6$ the vertices $a$ and $b$ are no longer $\Delta$-neighbors. Further nontrivial maximal \dC s in this temporal graph are: $(\{a,b\},[0,5])$, $(\{a,c\},[2,8])$, $(\{b,c\},[3,7])$, as well as the trivial $\Delta$-cliques $(\{a\},[0,8])$, $(\{b\},[0,8])$, and $(\{c\},[0,8])$.
\end{example}

\subsection{Bron-Kerbosch Algorithm}
\label{section:bronKerbosch}

In this section, we explain the basic idea of the (static) Bron-Kerbosch algorithm. We also present two techniques known from the literature which improve the running time of the algorithm.

The Bron-Kerbosch algorithm~\cite{bron1973algorithm} enumerates all maximal cliques in undirected, static graphs. It is a widely used recursive backtracking algorithm which is easy to implement and more efficient than alternative algorithms in many practical applications~\cite{ELS13}.

\begin{algorithm}[t]
\begin{algorithmic}[1] %
\Function{BronKerbosch}{$P, R, X$}
 \Comment{ $R:$ a clique}
 \Comment{ $P \cup X:$ set of all vertices $v$ such that $R \cup \{v\}$ is a clique and where \begin{itemize}
   \item vertices in $P$ have not yet been considered as additions to $R$ and 
   \item vertices in $X$ already have been considered in earlier steps
 \end{itemize}}
\If{$P \cup X = \emptyset$} 
      \State{add~$R$ to the solution}
\EndIf
\For{$v \in P$}
	\State{\Call{BronKerbosch}{$P \cap N(v), R \cup \{v\}, X \cap N(v))$}}
	\State $P \gets P \setminus \{v\}$
	\State $X \gets X \cup \{v\}$
\EndFor
\EndFunction
\end{algorithmic}
\caption{Enumerating all Maximal Cliques}
\label{alg:bronker}
\end{algorithm}
The Bron-Kerbosch algorithm, see Algorithm~\ref{alg:bronker}, receives three disjoint vertex-sets as an input:~$P$,~$R$, and~$X$. The set~$R$ induces a clique and~$P \cup X$ is the set of all vertices which are adjacent to every vertex in~$R$. Each vertex in~$P \cup X$ is a witness that the clique~$R$ is not maximal yet. The set~$P$ contains the vertices that have not been considered yet whereas the set~$X$ includes all vertices that have already been considered in earlier steps. %
In each recursive call, the algorithm checks whether the given clique~$R$ is maximal or not. If~$P \cup X = \emptyset$, then there are no vertices that can be added to the clique and therefore, the clique is maximal and can be added to the solution. 
 Otherwise, the clique is not maximal because at least one vertex exists that is adjacent to all vertices in $R$ and consequently would form a clique with $R$. For each $v \in P$ the algorithm makes a recursive call for the clique $R \cup \{v\}$ and restricts~$P$ and~$X$ to the neighborhood of~$v$. After the recursive call, vertex $v$ is removed from $P$ and added to $X$. This guarantees that the same maximal cliques are not detected multiple times.
For a graph~$G=(V,E)$ the algorithm is initially called with~$P=V$ and~$R=X=\emptyset$.

\subsubsection{Pivoting}
\label{subsection:BKPivoting}

\citet{bron1973algorithm} introduced a method to increase the efficiency of the basic algorithm by choosing a pivot element to decrease the number of recursive calls. It is based on the observation that for any vertex~$u \in P \cup X$ either~$u$ itself or one of its non-neighbors must be contained in any maximal clique containing~$R$. 
 This is true since if neither~$u$ nor one of the non-neighbors of~$u$ are included in a clique containing~$R$, then this clique cannot be maximal because~$u$ can be added to this clique due to the fact that only neighbors of~$u$ were added to~$R$.
Hence, if we modify the Bron-Kerbosch algorithm (Algorithm~\ref{alg:bronker}) so that we choose an arbitrary pivot element~$u \in P \cup X$ and iterate only over~$u$ and all its non-neighbors, then we still enumerate all maximal cliques containing~$R$ but decrease the number of recursive calls in the for-loop of Algorithm~\ref{alg:bronker}. %
\citet{tomita2006worst} have shown that if~$u$ is chosen from~$P \cup X$ such that~$u$ has the most neighbors in~$P$, then all maximal cliques of a graph~$G=(V,E)$ are enumerated in $O(3^{\mid V\mid /3})$~time, see Algorithm~\ref{alg:bronkerpivot}.
\begin{algorithm}[t]
\begin{algorithmic}[1] %
\Function{BronKerboschPivot}{$P, R, X$}
 \Comment{ $R:$ a clique} 
 \Comment{ $P \cup X:$ set of all vertices $v$ such that $R \cup \{v\}$ is a clique and where \begin{itemize}
   \item vertices in $P$ have not yet been considered as additions to $R$ and 
   \item vertices in $X$ already have been considered in earlier steps
 \end{itemize}}
\If{$P \cup X = \emptyset$}
	\State{add $R$ to the solution}
\EndIf
\State{choose pivot vertex $u \in P \cup X$ with $|P \cap N(u)| = \smash{\displaystyle \max_{v \in P \cup X}} \mid P \cap N(v) \mid$}
\For{$v \in P \setminus N(u)$}
	\State{\Call{BronKerboschPivot}{$P \cap N(v), R \cup \{v\}, X \cap N(v))$}}
	\State $P \gets P \setminus \{v\}$
	\State $X \gets X \cup \{v\}$
\EndFor
\EndFunction
\end{algorithmic}
\caption{Enumerating all Maximal Cliques in a Graph with Pivoting}
\label{alg:bronkerpivot}
\end{algorithm}

\subsubsection{Degeneracy of a Graph} 
\label{subsection:BKDegeneracy}
Degeneracy is a measure of graph sparsity. Real-world instances of static graphs (especially social networks) are often sparse, resulting in a small degeneracy value~\cite{ELS13}. This motivates a modification of the Bron-Kerbosch algorithm which we present in this section and the complexity analysis of this algorithm parameterized by the degeneracy of the input graph. The degeneracy of a graph is defined as follows.

\begin{definition}[Degeneracy]
The \emph{degeneracy} of a static graph~$G$ is defined as the smallest integer $d \in \mathbb{N}$ such
that each subgraph~$G'$ of~$G$ contains a vertex~$v$ with degree at most $d$.
\end{definition}
If a graph has degeneracy $d$, we also call it \emph{$d$-degenerated}. It is easy to see that the maximal clique
size of a $d$-degenerated graph is at most~$d+1$: If there is a clique of size at
least~$d+2$, then the vertices of this clique would form a subgraph in
which every vertex~$v$ of the clique has a degree larger than $d$.
For each~$d$-degenerated graph there is a \emph{degeneracy ordering}, which is a linear ordering of the vertices with the property that for every vertex~$v$ we have that at most~$d$ of its neighbors occur at a later position in the ordering. The degeneracy~$d$ and a corresponding degeneracy ordering for a graph~$G=(V,E)$  can be computed in linear time~\cite{ELS13}: For graph~$G$, the vertex with the smallest degree is selected in each step and removed from the graph until no vertex is left. The degeneracy of the graph is the highest degree of a vertex at the time the vertex has been removed from the graph and a corresponding degeneracy ordering is the order in which the vertices were removed from the graph.

For a graph~$G=(V,E)$ with degeneracy~$d$, \citet{ELS13} showed that using the degeneracy ordering of~$G$ in the outer-most recursive call and afterwards using pivoting, all maximal cliques can be enumerated in~$O(d \cdot | V | \cdot3^{d/3})$ time, see Algorithm~\ref{alg:bronkerdeg}. In other words, enumerating maximal cliques is fixed-parameter tractable with respect to the parameter degeneracy~$d$ of the input graph.
\begin{algorithm}[t]
\begin{algorithmic}[1] %
\Function{BronKerboschDeg}{$P, R, X$}
 \Comment{ $R:$ a clique}
 \Comment{ $P \cup X:$ set of all vertices $v$ such that $R \cup \{v\}$ is a clique and where \begin{itemize}
   \item vertices in $P$ have not yet been considered as additions to $R$ and 
   \item vertices in $X$ already have been considered in earlier steps
 \end{itemize}}
\For{$v_i$ in a degeneracy odering $v_0, v_1, \dots , v_n$ of $G=(V,E)$}
	\State $P \gets N(v_i) \cap \{v_{i+1}, \ldots ,v_{n-1}\}$
	\State $X \gets N(v_i) \cap \{v_{0}, \ldots ,v_{i-1}\}$
	\State{\Call{BronKerboschPivot}{$P, \{v_i\}, X $}}
\EndFor
\EndFunction
\end{algorithmic}
\caption{Enumerating all Maximal Cliques in a Graph with Degeneracy Ordering}
\label{alg:bronkerdeg}
\end{algorithm}

\section{Bron-Kerbosch Algorithm for Temporal Graphs}
\label{section:bronKerboschDelta}

We adapt the static Bron-Kerbosch algorithm to the temporal setting to enumerate all $\Delta$-cliques, see Algorithm~\ref{alg:bronkerdelta}. 
The input of the algorithm consists of two sets~$P$~and~$X$ of vertex-interval pairs as well as a tuple~$R=(C, I)$, where~$C$ is a set of vertices and~$I$ a time interval. The idea is that in every recursive call of the algorithm,~$R$ is a time-maximal $\Delta$-clique, and the sets~$P$~and~$X$ contain vertex-interval pairs that are in the $\Delta$-neighborhood of every vertex in~$C$ during an interval~$I' \subseteq I$.
In particular, $P \cup X$ includes all vertex-interval pairs~$(v,I)$ for which~$(C \cup \{v\}, I)$ is a time-maximal $\Delta$-clique.
While each vertex-interval pair in~$P$ still has to be combined with~$R$ to ensure that every maximal $\Delta$-clique will be found, for every vertex-interval pair~$(v,I') \in X$ every maximal $\Delta$-clique~$(C', I'')$ with~$C \cup \{v\} \subseteq C'$ and~$I'' \subseteq I'$ has already been detected in earlier steps.

\begin{algorithm}[t]
\begin{algorithmic}[1] %
\Function{\bkd}{$P ,R=(C,I) , X$} 
 \Comment{ $R=(C,I):$ time-maximal $\Delta$-clique}
 \Comment{ $P \cup X:$ set of all vertex-interval pairs $(v,I')$ such that $I' \subseteq I$ and $(C \cup \{v\}, I')$ is a time-maximal $\Delta$-clique and where
 \begin{itemize}
   \item vertex-interval pairs in $P$ have not yet been considered as additions to $R$ and
   \item vertex-interval pairs in $X$ already have been considered in earlier steps
 \end{itemize}}
\If{$\forall (w, I') \in P \cup X \colon I' \subsetneq I$}
	\State{add~$R$ to the solution}
\EndIf
\For{$(v,I') \in P$}
	\State{$R' \gets (C \cup \{v\}, I')$}
	\State{$P' \gets P \sqcap N^{\Delta}(v,I')$}
	\State{$X' \gets X  \sqcap N^{\Delta}(v,I')$}
	\State{\Call{\bkd}{$P',R', X'$}}
	\State{$P \gets P \setminus \{(v,I')\}$}
	\State{$X \gets X \cup \{(v,I')\}$}
\EndFor
\EndFunction
\end{algorithmic}
\caption{Enumerating all Maximal $\Delta$-Cliques}
\label{alg:bronkerdelta}
\end{algorithm}

We show below that if~$\forall (w, I') \in P \cup X \colon I' \subsetneq I$, then there is no vertex~$v$ that forms a $\Delta$-clique together with~$C$ over the whole time interval~$I$. Consequently,~$R=(C, I)$ is a maximal $\Delta$-clique. 

In one step, for every vertex-interval pair $(v,I') \in P$ a recursive call is initiated for the $\Delta$-clique $R'=(C \cup \{v\}, I')$ with all parameters restricted to the $\Delta$-neighborhood of $v$ in the time interval $I'$, that is, $P \sqcap N^{\Delta}(v,I')$ and $X \sqcap N^{\Delta}(v,I')$. For the set $P'$ for example, we get a set of all time-maximal vertex-interval pairs $(w,I'')$ for which it holds that $(w,I'') \sqin N^{\Delta}(v,I')$ and $(w,I'') \sqin P$. This restriction is made so that for all $(w,I'') \in P'$ of the recursive call the vertex $w$ is not only a $\Delta$-neighbor of all $x \in C$ but also of the vertex~$v$ during the time $I'' \subseteq I'$.

After the recursive call for $\Delta$-clique $(C \cup \{v\}, I')$, the tuple $(v,I')$ is removed from the set $P$ and added to the set $X$ to avoid that the same cliques are found multiple times.

For a temporal graph~$\mathbb{G}=(V,E,T)$ and a given time period~$\Delta$, the \emph{initial call} of Algorithm \ref{alg:bronkerdelta} to enumerate all maximal $\Delta$-cliques in graph~$\mathbb{G}$ is made with $P = \{(v,T) \mid v \in V\}$, $R = (\emptyset, T)$ and~$X=\emptyset$. In the remainder of this document we will always assume that \bkd{} is initially called with those inputs.

\subsection{Analysis}
In the following, we prove the correctness of the algorithm and analyze its running time. We start with arguing that the sets~$P$ and~$X$ behave as claimed.
\begin{lemma}
\label{lemma:setPX}
For each recursive call of \bkd{} with~$R=(C,I)$ and~$C\neq\emptyset$, it holds that~$P \cup X = \bigsqcap_{v \in C} N^{\Delta}(v,I)$.
\end{lemma}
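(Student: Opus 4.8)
The plan is to prove the statement by induction on $|C|$, which coincides with the recursion depth since each recursive call of \bkd{} adds exactly one vertex to $C$. The first thing I would record is that $P \cup X$ does not change during the execution of the for-loop: the only modifications are moving a pair $(v,I')$ from $P$ to $X$. Hence it suffices to verify the claimed equality at the entry of each recursive call. A second ingredient I would set up first is a clean membership characterization of the iterated $\Delta$-cut: for a nonempty vertex set $S$ and an interval $K$, a pair $(u,J)$ satisfies $(u,J)\sqin\bigsqcap_{w\in S}N^{\Delta}(w,K)$ if and only if $J\subseteq K$, $|J|\ge\Delta$, and $u$ is a $\Delta$-neighbor of every $w\in S$ during $J$; moreover $(u,J)\in\bigsqcap_{w\in S}N^{\Delta}(w,K)$ holds precisely when, in addition, $J$ is maximal among intervals in $K$ with this property. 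This follows by a routine induction on $|S|$ from the definition of $\sqcap$ together with its ``in other words'' reformulation (temporal membership plus maximality), using that $\sqcap$ is associative and commutative in the relevant sense.

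For the base case $|C|=1$, I would argue from the structure of the initial call, where $P=\{(w,T)\mid w\in V\}$, $X=\emptyset$, $C=\emptyset$, and $I=T$. Any depth-one call is spawned while processing some $(v,T)\in P$, so its clique is $R'=(\{v\},T)$ and its parameters are $P'=P\sqcap N^{\Delta}(v,T)$ and $X'=X\sqcap N^{\Delta}(v,T)$. Since $P\cup X=\{(w,T)\mid w\in V\}$ is invariant throughout the loop, distributivity of $\sqcap$ over $\cup$ gives $P'\cup X'=\{(w,T)\mid w\in V\}\sqcap N^{\Delta}(v,T)$; as every interval of $N^{\Delta}(v,T)$ lies in $T$, already has length at least $\Delta$, and is maximal, this equals $N^{\Delta}(v,T)=\bigsqcap_{w\in\{v\}}N^{\Delta}(w,T)$, as required.

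For the inductive step, consider a call with clique $R=(C,I)$ and $|C|=k+1\ge 2$. It is spawned by a parent call with clique $R_0=(C_0,I_0)$, where $C=C_0\cup\{v\}$, $|C_0|=k\ge 1$, and the pair $(v,I)\in P_0$ triggering it satisfies $I\subseteq I_0$ (by the induction hypothesis all intervals appearing in $P_0$ are sub-intervals of $I_0$). Applying the hypothesis to the parent, $P_0\cup X_0=\bigsqcap_{w\in C_0}N^{\Delta}(w,I_0)$ at the moment $(v,I)$ is processed. Using distributivity of $\sqcap$ over $\cup$ (immediate from the set-builder definition of the $\Delta$-cut), the new parameters satisfy
\[
P\cup X=(P_0\sqcap N^{\Delta}(v,I))\cup(X_0\sqcap N^{\Delta}(v,I))=(P_0\cup X_0)\sqcap N^{\Delta}(v,I)=\Bigl(\bigsqcap_{w\in C_0}N^{\Delta}(w,I_0)\Bigr)\sqcap N^{\Delta}(v,I).
\]
It then remains to prove the key identity $\bigl(\bigsqcap_{w\in C_0}N^{\Delta}(w,I_0)\bigr)\sqcap N^{\Delta}(v,I)=\bigsqcap_{w\in C_0\cup\{v\}}N^{\Delta}(w,I)$, which I would establish by applying the membership characterization to both sides: a pair $(u,J)$ is a temporal member of either side exactly when $J\subseteq I$, $|J|\ge\Delta$, and $u$ is a $\Delta$-neighbor of every $w\in C_0\cup\{v\}$ during $J$, and the maximality conditions match as well.

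I expect the maximality bookkeeping in this last identity to be the main obstacle. The delicate point is that on the left the neighborhoods of the vertices in $C_0$ are taken over the larger interval $I_0$, whereas on the right they are taken over $I$. This is reconciled precisely because $I\subseteq I_0$: in the left-hand cut the extra factor $N^{\Delta}(v,I)$ forces every surviving interval into $I$, so the effective constraint $J\subseteq I_0\cap I$ collapses to $J\subseteq I$, and maximality is then measured within $I$ on both sides. I would be careful to invoke the maximality clause of the $\Delta$-cut definition (rather than the bare set-builder form), since intervals maximal inside $I_0$ must be re-truncated to intervals maximal inside $I$, and it is exactly this re-maximalization that the additional $\sqcap N^{\Delta}(v,I)$ performs.
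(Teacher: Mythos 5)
Your proof is correct and follows essentially the same route as the paper's: induction on the recursion depth, invariance of $P \cup X$ under the for-loop updates, distributivity of $\sqcap$ over $\cup$, and the collapsing identity $\bigl(\bigsqcap_{w\in C_0}N^{\Delta}(w,I_0)\bigr)\sqcap N^{\Delta}(v,I)=\bigsqcap_{w\in C_0\cup\{v\}}N^{\Delta}(w,I)$. The only difference is that you spell out the membership characterization and the maximality/re-maximalization bookkeeping behind that last identity, which the paper's proof asserts implicitly as the final step in its chain of equalities.
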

\begin{proof}
We prove this by induction on the recursion depth, that is, the number~$|C|$ of vertices in the clique in the current recursive call. In the initial call we have that~$C = \emptyset$. In each iteration of the first call we have that~$P \cup X = \{(v,T) \mid v \in V\}$ since, whenever a vertex-interval pair is removed from~$P$, then it is added to~$X$, and initially~$P = \{(v,T) \mid v \in V\}$. For every recursive call of \bkd{} with~$R'=(C', I')$,~$P'$, and~$X'$, and~$C' = \{v\}$ for some vertex~$v$ we have that $P'=P\sqcap N^{\Delta}(v,I')$ and~$X' = X  \sqcap N^{\Delta}(v,I')$. Hence, we get
\[
P' \cup X' = \{(v,T) \mid v \in V\} \sqcap N^{\Delta}(v,I') = N^{\Delta}(v,I').
\]
Now we assume that we are in a recursive call of \bkd{} with~$R=(C, I)$,~$P$, and~$X$, where~$|C| > 1$.
By the induction hypothesis we know that $P \cup X = \bigsqcap_{v \in C} N^{\Delta}(v,I)$.
Let~$(v,I') \in P$ be the vertex added to the $\Delta$-clique, that is, in the next recursive call we have that~$R'=(C', I')$, with~$C' = C \cup \{v\}$, and~$P' = P \sqcap N^{\Delta}(v,I')$ as well as~$X' = X  \sqcap N^{\Delta}(v,I')$. Then,
\begin{align*}
	P' \cup X'
	&= (P \sqcap N^{\Delta}(v,I'))\cup (X\sqcap N^{\Delta}(v,I'))\\
 	&= (P\cup X)\sqcap N^{\Delta}(v,I')\\
 	&= \underset{w \in C}{\bigsqcap} N^{\Delta}(w,I) \sqcap N^{\Delta}(v,I')\\
	&= \underset{w \in C'}{\bigsqcap} N^{\Delta}(w,I').
 \end{align*}
This proves the claim.
\end{proof}
Next, we show that the set~$R$ behaves as claimed, that is,~$R$ is indeed a time-maximal $\Delta$-clique in each recursive call of \bkd{}.
\begin{lemma}
\label{lemma:timeMax}
In each recursive call of \bkd{},~$R=(C, I)$ is a time-maximal $\Delta$-clique.
\end{lemma}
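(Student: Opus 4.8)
The plan is to prove the statement by induction on the recursion depth, which equals the number $|C|$ of vertices in the current clique $R=(C,I)$, and to lean throughout on Lemma~\ref{lemma:setPX}, which tells us that $P \cup X = \bigsqcap_{w \in C} N^{\Delta}(w,I)$ whenever $C \neq \emptyset$. For the base case, the initial call has $R=(\emptyset,T)$, which is vacuously a $\Delta$-clique and is time-maximal because $I=T$ is the whole lifetime and cannot be enlarged; the first level of recursive calls produces single-vertex cliques $(\{v\},T)$, which have no pair of vertices to connect and are therefore trivially $\Delta$-cliques, and which are again time-maximal since $I=T$ is already maximal.

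For the inductive step I assume $R=(C,I)$ is a time-maximal $\Delta$-clique and examine the recursive call on $R'=(C\cup\{v\},I')$ for some $(v,I')\in P$. First I would argue that $R'$ is a $\Delta$-clique. Every pair within $C$ is $\Delta$-connected over $I$ by the induction hypothesis, and hence also over the subinterval $I'\subseteq I$, since the defining window condition on $[\,a',b'-\Delta\,]$ is a restriction of the condition on $[\,a,b-\Delta\,]$ and $|I'|\ge\Delta$ holds because $(v,I')$ lies in a $\Delta$-cut. For the pairs involving $v$, Lemma~\ref{lemma:setPX} gives $(v,I')\in\bigsqcap_{w\in C} N^{\Delta}(w,I)$, so $v$ is a $\Delta$-neighbor of every $w\in C$ during $I'$. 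Together these show that all pairs in $C\cup\{v\}$ are $\Delta$-connected over $I'$.

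The core of the proof is time-maximality of $R'$. I would suppose for contradiction that there is a $\Delta$-clique $(C'',I'')$ with $C\cup\{v\}\subseteq C''$ and $I'\subsetneq I''$, and split into two cases according to whether the extended interval escapes $I$. If $I''\subseteq I$, then $v$ is a $\Delta$-neighbor of every $w\in C$ over $I''$ with $I''\subseteq I$, so $(v,I'')\sqin\bigsqcap_{w\in C} N^{\Delta}(w,I)$ with $I'\subsetneq I''$, contradicting the maximality of the interval $I'$ built into the definition of the $\Delta$-cut. If instead $I''\not\subseteq I$, then I claim $(C,I\cup I'')$ is a $\Delta$-clique: since $I'\subseteq I\cap I''$ and $|I'|\ge\Delta$, the intervals $I$ and $I''$ overlap in a block of length at least $\Delta$, which lets me glue the window conditions of $(C,I)$ and $(C,I'')$ together over the contiguous interval $I\cup I''$. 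As $I''\not\subseteq I$ forces $I\subsetneq I\cup I''$, this contradicts the time-maximality of $R=(C,I)$ from the induction hypothesis.

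I expect the main obstacle to be this second case, namely verifying the gluing: that two overlapping intervals, each satisfying the $\Delta$-window condition for all pairs of $C$, combine into a single interval that still satisfies it. The quantitative key is that the overlap $I\cap I''$ has length at least $\Delta$ (because it contains $I'$), which guarantees that the sets of valid window starting points $[\,a,b-\Delta\,]$ and $[\,a'',b''-\Delta\,]$ themselves overlap, so their union is exactly $I\cup I''$ minus its final $\Delta$ steps. Confirming this bookkeeping carefully---including the boundary case where the overlap has length precisely $\Delta$---is where I would focus the detailed verification.
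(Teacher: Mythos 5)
Your proof is correct and follows the same overall skeleton as the paper's: induction on the recursion depth, with the key input being that any $(v,I')\in P$ makes $v$ a $\Delta$-neighbor of every vertex of $C$ during a maximal interval $I'$. The differences are in how that input is obtained and how much of the maximality argument is made explicit. The paper does not invoke Lemma~\ref{lemma:setPX}; instead it strengthens the induction hypothesis to also assert the neighbor-and-maximality property of the pairs in $P$, and then concludes time-maximality of $(C\cup\{v\},I')$ in one step (``$I'$ is maximal, hence \dots''), leaving implicit why no extension of $I'$ can exist. You derive the property of $P$ from Lemma~\ref{lemma:setPX} (handling $|C|\le 1$ separately, since that lemma needs $C\neq\emptyset$), which is cleaner since that lemma is already available, and you then unpack the time-maximality claim into the two cases it actually requires: an extension $I''\subseteq I$ contradicts the interval-maximality baked into the $\Delta$-cut, while an extension escaping $I$ contradicts time-maximality of $(C,I)$ via gluing the two $\Delta$-window conditions over an overlap of length at least $\Delta$. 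This second case, including the gluing lemma and its quantitative threshold, is exactly what the paper's ``hence'' compresses, so your version is more complete at the one point where the paper is terse; the paper's version buys brevity at the cost of hiding that case split. One small caveat: in your first case, the ``maximality built into the $\Delta$-cut'' ultimately rests on the fact that, per vertex, the intervals occurring in $P\cup X$ pairwise overlap in fewer than $\Delta$ time steps (so no strictly larger interval for $v$ can coexist with $I'$); this is the same property the paper uses elsewhere (e.g., in the proof of Lemma~\ref{lemma:cliquecount}) and is worth stating explicitly if you want your case~1 to be airtight.
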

\begin{proof}
We show by induction on the recursion depth that~$R=(C, I)$ is a time-maximal $\Delta$-clique and that all vertex-interval pairs~$(v, I')$ in~$P$ are $\Delta$\nobreakdash-neighbors during~$I'$ to all vertices in the $\Delta$-clique~$R$ and that~$I'$ is maximal under this property. The algorithm is initially called with~$R = (\emptyset, T)$, which is a trivial time-maximal $\Delta$-clique, and~$P = \{(v,T) \mid v \in V\}$, which fulfills the desired property since the initial $\Delta$-clique is empty and~$T$ is the maximum time interval. In each recursive call \bkd{} is called with~$(P \sqcap N^{\Delta}(v,I'), (C \cup \{v\}, I'), X  \sqcap N^{\Delta}(v,I'))$ for some~$(v, I')\in P$. By the induction hypothesis,~$v$ is a $\Delta$-neighbor to all vertices in~$C$ during time interval~$I'$, and~$I'$~is maximal. Hence, $(C \cup \{v\}, I')$ is a time-maximal \dC. Furthermore, each vertex-interval pair~$(v', I'')$ in~$P \sqcap N^{\Delta}(v,I')$ is in the $\Delta$\nobreakdash-neighbor\-hood of each vertex-interval pair~$(v'', I')$ with~$v'' \in C \cup \{v\}$, since it is both in~$P$ and hence in the $\Delta$-neighborhood of each vertex in~$C$ and in~$N^{\Delta}(v,I')$. The maximality of~$I'$ follows from the fact that the $\Delta$-cut and $\Delta$-neighborhood operations preserve maximality of intervals by definition.
\end{proof}
Now we can prove the correctness of the algorithm.
\begin{theorem}[Correctness of Algorithm~\ref{alg:bronkerdelta}]
\label{thm:correctness}
Let~$\mathbb{G}=(V,E,T)$ be a temporal graph. If algorithm \bkd{}$(P, R, X)$ is run on input~$(V \times \{T\}, (\emptyset, T), \emptyset)$, then it adds all maximal $\Delta$-cliques of~$\mathbb{G}$, and only these, to the solution. 
\end{theorem}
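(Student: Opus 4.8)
The plan is to establish two inclusions separately: \emph{soundness} (everything added is a maximal \dC) and \emph{completeness} (every maximal \dC\ is added), after first observing that the algorithm terminates, since each recursive call enlarges $C$ by one vertex (so the recursion depth is at most $|V|$) and within each call the for-loop strictly shrinks $P$. Throughout I would lean on Lemma~\ref{lemma:setPX} ($P\cup X=\bigsqcap_{v\in C}N^{\Delta}(v,I)$ whenever $C\neq\emptyset$) and Lemma~\ref{lemma:timeMax} ($R=(C,I)$ is always a time-maximal \dC). Note that only the initial call has $C=\emptyset$, and there nothing is added because every pair in $P=V\times\{T\}$ has interval exactly $T$; hence every added $R$ has $C\neq\emptyset$, so Lemma~\ref{lemma:setPX} applies to it.

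For soundness, suppose $R=(C,I)$ is added. By Lemma~\ref{lemma:timeMax} it is a time-maximal \dC, so it remains to prove vertex-maximality. Assume toward a contradiction that some $w\notin C$ makes $(C\cup\{w\},I)$ a \dC; then $w$ is a $\Delta$-neighbor of every vertex of $C$ during $I$, so $(w,I)\sqin N^{\Delta}(v,I)$ for all $v\in C$ and therefore $(w,I)\sqin P\cup X$ by Lemma~\ref{lemma:setPX}. Thus there is a pair $(w,J)\in P\cup X$ with $I\subseteq J$; since every interval occurring in $P\cup X$ is contained in $I$, we get $J=I$, contradicting the condition $\forall (w,I')\in P\cup X\colon I'\subsetneq I$ under which $R$ was added. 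Hence $R$ is also vertex-maximal, and therefore maximal.

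For completeness I would fix an arbitrary maximal \dC\ $(C^{*},I^{*})$ and prove by induction on $|C^{*}\setminus C|$ the invariant: \emph{every} call \bkd$(P,R=(C,I),X)$ with $C\subseteq C^{*}$, $I^{*}\subseteq I$, and $(w,I^{*})\sqin P$ for all $w\in C^{*}\setminus C$ adds $(C^{*},I^{*})$ somewhere in its recursion subtree. The initial call satisfies the hypotheses via the pairs $(w,T)$. In the base case $C=C^{*}$ one has $I=I^{*}$: indeed $(C^{*},I)$ is time-maximal by Lemma~\ref{lemma:timeMax} and $I\supseteq I^{*}$, while a proper containment $I\supsetneq I^{*}$ would contradict time-maximality of $(C^{*},I^{*})$; the add-condition then holds, since a pair in $P\cup X$ with full interval $I^{*}$ would, by Lemma~\ref{lemma:setPX}, yield a vertex extending $(C^{*},I^{*})$ and contradict vertex-maximality. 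For the inductive step I run the loop and single out the vertex $w\in C^{*}\setminus C$ whose pair $(w,J_{w})$ with $I^{*}\subseteq J_{w}$ is processed \emph{earliest}; the resulting child call has $R'=(C\cup\{w\},J_{w})$ with $J_{w}\supseteq I^{*}$, and I must verify $(w',I^{*})\sqin P'$ for every remaining $w'\in C^{*}\setminus(C\cup\{w\})$. This rests on two points: the relevant pair $(w',J_{w'})$ is still in $P$ at the moment $(w,J_{w})$ is processed (such pairs are removed only upon processing, and $(w,J_{w})$ is the first one processed), and since $w,w'\in C^{*}$ are $\Delta$-neighbors during $I^{*}\subseteq J_{w}$, the $\Delta$-cut $P'=P\sqcap N^{\Delta}(w,J_{w})$ contains a pair for $w'$ whose interval contains $I^{*}$. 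The induction hypothesis then applies to the child call.

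The crux is the completeness argument, and specifically its interval bookkeeping: unlike the static case, the same vertex may appear in $P$ with several maximal intervals, so I would first record the (easy but essential) observation that for a fixed target interval $I^{*}$ each vertex has \emph{at most one} pair in $P\cup X$ whose interval contains $I^{*}$ --- two such maximal intervals would both contain a window of length $\ge\Delta$ and thus could be merged into a strictly larger admissible interval, contradicting maximality. This uniqueness is what lets me speak of ``the'' relevant pair, and it drives both the ``first-processed'' selection and the fact that the interval contracts to exactly $I^{*}$ once all of $C^{*}$ has been added. If the statement is also meant to forbid duplicate outputs, I would additionally carry the condition ``no $w\in C^{*}\setminus C$ has $(w,I^{*})\sqin X$'' in the invariant and use the movement of processed pairs into $X$ to show each maximal \dC\ is reported exactly once.
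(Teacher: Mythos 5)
Your proof is correct and follows essentially the same route as the paper's: soundness via Lemmas~\ref{lemma:timeMax} and~\ref{lemma:setPX}, and completeness by an induction that tracks, down the recursion tree, the candidate vertex-interval pairs whose intervals contain~$I^*$ and descends via the first-processed candidate. If anything, your soundness step is slightly more careful than the paper's own wording (which informally says ``$P \cup X = \emptyset$'' instead of arguing from the actual add-condition $\forall (w,I') \in P \cup X\colon I' \subsetneq I$), and your termination and uniqueness remarks are harmless extras.
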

\begin{proof}
  Let~$R^*=(C^*, I^*)$ be a maximal $\Delta$-clique
  with~$|C^*|>1$. For a recursive call of \bkd\ on $(P, R, X)$, say
  that a vertex is a \emph{candidate}, if there is an interval $I$
  with $I^* \subseteq I$ such that $(v, I) \in P$. We show by
  induction on $|C^*| - \ell$, that for each
  $\ell = 0, 1, \ldots, |C^*|$ there is a recursive call of \bkd\ on
  $(P, R = (C, I), X)$ with $C \subseteq C^*$ and
  $\ell = |C^* \setminus C|$ candidates.

  Clearly, in the initial call, $C = \emptyset \subseteq C^*$ and each
  vertex in $C^*$ is a candidate. Now assume that there is a recursive
  call with~$(P, R = (C, I), X)$ and~$C \subseteq C^*$, and with
  $\ell - 1 = |C^* \setminus C|$ candidates. Consider the for-loop in
  that recursive call and consider the first vertex-interval
  pair~$(v, I')$ in that loop in which $v$ is a candidate and
  $I^* \subseteq I'$. \bkd\ proceeds with a recursive call on
  $(P \sqcap N^{\Delta}(v, I'), R' = (C \cup \{v\}, I'), X
  \sqcap N^{\Delta}(v, I'))$. Observe that each candidate except
  $v$ remains a candidate also in this recursive call. Furthermore,~$|C^* \setminus (C \cup \{v\})| = \ell$. Thus, by induction there is
  a recursive call with the sets~$(P, R, X)$ in which $R^* =
  R$. Since $R^*$ is maximal by assumption, for each vertex-interval pair $(w, I'') \in \bigsqcap_{v \in C^*} N^{\Delta}(v, I^*)$ we have $I'' \subsetneq I^*$. By
  Lemma~\ref{lemma:setPX} we have $\forall (w, I') \in P \cup X \colon I' \subsetneq I^*$ and hence, $R^*$
  is added to the solution.
  
  Now assume that $R=(C,I)$ is added to the solution. By Lemma~\ref{lemma:timeMax}, $R$ is a time-maximal $\Delta$-clique. By Lemma~\ref{lemma:setPX} and since $P \cup X = \emptyset$, there is no vertex that can be added to~$R$. Hence, $R$ is a maximal $\Delta$-clique.
\end{proof}
Next, we analyze the running time of \bkd{}. We start with the following observation.
\begin{lemma}
\label{lemma:cliquecount}
For every time-maximal $\Delta$-clique~$R$ of a temporal graph~$\mathbb{G}=(V,E,T)$, there is at most one recursive call of \bkd{} with~$R$ as an input.
\end{lemma}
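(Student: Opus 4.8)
The plan is to argue by contradiction: I assume that some time-maximal $\Delta$-clique $R=(C,I)$ is the input of two distinct recursive calls and derive a contradiction from the bookkeeping performed by the sets $P$ and $X$. First I would observe that, since every recursive call adds exactly one vertex to the current clique, both calls sit at depth $|C|$ in the recursion tree, and the sequence of vertex-interval pairs added along each root-to-call path uses exactly the (pairwise distinct) vertices of $C$. Moreover, the interval of the pair added at each step becomes the new clique's interval, and since $N^{\Delta}(\cdot)$ and $\sqcap$ only produce subintervals, these intervals are nested and terminate in $I$. Hence both paths start at the initial call, agree up to a first point of divergence at a common node $\mathcal{N}$ with clique $(C',I'_{\mathcal{N}})$ and $C'\subsetneq C$, at which path~1 processes a pair $(v,J)$ and path~2 processes a distinct pair $(v',J')$ from $P_{\mathcal{N}}$.

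The technical heart is a uniqueness statement about intervals, which I would isolate as a claim: in the set $P\cup X$ at any recursive call -- which by Lemma~\ref{lemma:setPX} equals $\bigsqcap_{w\in C'}N^{\Delta}(w,I'_{\mathcal{N}})$, and is trivial at the initial call -- for each fixed vertex $u$ there is at most one interval containing a prescribed interval of length at least $\Delta$. This holds because the intervals attached to $u$ in this $\Delta$-cut are the inclusion-maximal intervals during which $u$ is a $\Delta$-neighbor of all of $C'$, and two such maximal intervals can overlap in fewer than $\Delta$ time steps: if they overlapped in at least $\Delta$ steps, their union would again witness the $\Delta$-neighbor property (for every vertex pair the two shifted witness-ranges would cover the union), contradicting maximality. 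Since $I$ has length at least $\Delta$, at most one maximal interval can contain $I$.

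With this claim the two cases are short. If $v=v'$ but $J\neq J'$, then since both paths descend from $(v,J)$ and $(v,J')$ respectively down to the common interval $I$, nesting gives $I\subseteq J$ and $I\subseteq J'$; the claim at $\mathcal{N}$ then forces $J=J'$, a contradiction. If $v\neq v'$, I would take path~1's pair to be the one processed first in the for-loop at $\mathcal{N}$, so that it is moved into $X$ before path~2 branches on $(v',J')$. I then track the pair for $v$ along path~2: since $v\in C$, it is a $\Delta$-neighbor of every vertex of $C$ during $I$, so at each subsequent $\sqcap\,N^{\Delta}(w,\cdot)$ performed by path~2 (with $w\in C$ and running interval containing $I$) the $X$-entry for $v$ intersects a neighborhood interval containing $I$ and therefore survives with an interval still containing $I$. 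Consequently, when path~2 finally adds $v$ to reach $C$, it must take a pair $(v,M)$ out of $P$ with $M\supseteq I$ (by nesting), while the surviving $X$-entry $(v,M')$ also satisfies $M'\supseteq I$; the uniqueness claim gives $M=M'$, so the same pair lies simultaneously in $P$ and in $X$ at that node, which is impossible since these sets are disjoint throughout a call.

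I expect the interval bookkeeping to be the main obstacle. In the static Bron-Kerbosch algorithm a vertex is cleanly either in $P$ or in $X$, and the argument is immediate; here a single vertex may occur with several distinct maximal intervals, so the naive ``once in $X$, never re-added'' reasoning can fail. The uniqueness-of-the-maximal-interval-containing-$I$ claim is precisely what repairs this: it guarantees that the interval a vertex would have to use to re-enter the clique on the way to $(C,I)$ coincides with the interval under which it is already locked into $X$, restoring the disjointness contradiction. Additional care is needed to confirm the survival of the $X$-entry under repeated $\Delta$-cuts, which relies on all vertices of $C$ being mutual $\Delta$-neighbors throughout $I$.
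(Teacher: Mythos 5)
Your proof follows essentially the same route as the paper's: assume two recursive calls receive $R=(C,I)$, locate their closest common ancestor in the recursion tree, and split into the case where the two diverging pairs share the same vertex (ruled out because, for a fixed vertex, intervals in the $\Delta$-cut $P \cup X$ pairwise overlap in fewer than $\Delta$ time steps, so at most one can contain $I$) and the case of different vertices (ruled out because the first-processed pair is moved to $X$, which blocks the second path from ever adding that vertex again). Your write-up is correct and in fact more detailed than the paper's own proof---the explicit uniqueness claim with its interval-union argument, the survival of the $X$-entry under repeated $\Delta$-cuts, and the final disjointness contradiction spell out rigorously what the paper compresses into a single terse sentence in the $v \neq w$ case.
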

\begin{proof}
  Assume that there are two recursive calls $A$ and $B$ of \bkd{} with the same~$R=(C, I)$. Let~$R'=(C', I')$, with~$C'\subset C$ and~$I \subseteq I'$, occur in the recursive call corresponding to the closest common ancestor of the recursive calls $A$ and $B$ in the recursion tree. Hence, there are two vertex-interval pairs~$(v, J), (w, J') \in P$ that lead to the calls~$A$ and~$B$, respectively, in the for loop.

  Consider the case $v = w$. Then, $J$ and $J'$ must overlap in at least $\Delta$ time steps, because $I \subseteq J, J' \subseteq I'$. However, $P$ is contained in the $\Delta$-cut of the $\Delta$-neighborhoods of $C'$ over $I'$ and thus, for each vertex no two vertex-interval pairs in $P$ overlap in $\Delta$ time steps, a contradiction.

  Now consider the case $v \neq w$. Without loss of generality due to symmetry assume that $(v, J)$ is processed first in the for loop. Then, when processing~$(w, J')$, pair~$(v, J)$ has been added to~$X$. This is a contradiction to the fact that recursive call~$B$ outputs $R$, that is, it outputs a clique with time interval~$I \subseteq J$.

  Hence, we have that there cannot be two recursive calls of \bkd{} with~$R=(C, I)$.
\end{proof}
Now we upper-bound the running time for computing a $\Delta$-cut.
\begin{lemma}\label{lemma:deltacut}
 
Let~$X$ and~$Y$ be two sets of vertex-interval pairs with the following properties.
\begin{compactitem}
\item For every~$(v, I)\in X\cup Y$ we have that $|I|\ge \Delta$,
  \item for every~$(v, I)\in X$ and~$(v, I')\in X$ we have that~$|I\cap I'|<\Delta$,
  \item for every~$(v, I)\in Y$ and~$(v, I')\in Y$ we have that~$|I\cap I'|<\Delta$,
  \item $X$ and~$Y$ are sorted lexicographically by first the vertex and then the starting point of the interval.
\end{compactitem}
Then the~$\Delta$-cut~$X \sqcap Y$ can be computed in~$O(|X|+|Y|)$ time such that it is also sorted lexicographically by first the vertex and then the starting point of the interval.
\end{lemma}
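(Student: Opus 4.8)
The plan is to reduce the computation to a vertex-by-vertex merge and then exploit the two pairwise-overlap hypotheses to control both the running time and the output. Since $X$ and $Y$ are sorted lexicographically by vertex and then by interval starting point, I would sweep through both lists simultaneously with two pointers, isolating for each vertex $v$ the sublist $X_v$ of its intervals in $X$ and the sublist $Y_v$ of its intervals in $Y$; vertices occurring in only one list contribute nothing to the $\Delta$-cut and are skipped in $O(1)$ amortized time each. It then remains to compute, for each $v$, all maximal intersections $I\cap I'$ with $(v,I)\in X_v$, $(v,I')\in Y_v$ and $|I\cap I'|\ge\Delta$, in time $O(|X_v|+|Y_v|)$ and in sorted order; summing over $v$ yields the claimed bound.

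The first structural step I would establish is that each of $X_v$ and $Y_v$ forms a \emph{staircase}: sorting by starting point also sorts by endpoint, and both sequences are strictly increasing. Indeed, if $(v,[a,b])$ and $(v,[a',b'])$ lie in $X_v$ with $a\le a'$, then $b<b'$ must hold, since otherwise $[a',b']\subseteq[a,b]$ and the two intervals would overlap in at least $\min(|[a,b]|,|[a',b']|)\ge\Delta$ time steps, contradicting the second hypothesis; the same argument rules out $a=a'$. Writing $I_1,I_2,\dots$ for the intervals of $X_v$ (the \emph{red} intervals, with $I_i=[a_i,b_i]$) and $J_1,J_2,\dots$ for those of $Y_v$ (the \emph{blue} intervals, with $J_j=[c_j,d_j]$), I would then run the standard two-pointer interval merge: maintaining pointers $i$ and $j$, at each step I compute $I_i\cap J_j$, emit $(v,I_i\cap J_j)$ if its size is at least $\Delta$, and advance whichever of $I_i,J_j$ has the smaller endpoint (breaking ties arbitrarily). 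Each step advances at least one pointer, so the merge for $v$ performs $O(|X_v|+|Y_v|)$ steps and a fortiori emits $O(|X_v|+|Y_v|)$ intervals; summing gives total time $O(|X|+|Y|)$.

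The main work, and the step I expect to require the most care, is proving that this merge is \emph{correct}, i.e.\ that advancing the interval with the smaller endpoint never discards a pair of overlap at least $\Delta$. The key fact I would prove is that a fixed red interval $I_i$ can overlap in at least $\Delta$ time steps with \emph{at most one} blue interval whose endpoint is $\ge b_i$: any such blue $J_j$ satisfies $\min(b_i,d_j)=b_i$ and hence must contain the length-$\Delta$ window at the right end of $I_i$, and two blue intervals both containing that window would overlap in $\ge\Delta$ time steps, contradicting the third hypothesis. Consequently, when $I_i$ ends no later than $J_j$, the staircase property guarantees $d_{j'}\ge b_i$ for all $j'\ge j$, so all remaining blue partners of $I_i$ are of this type; since $\max(a_i,c_{j'})$ is nondecreasing in $j'$, the overlap of $I_i$ with $J_{j'}$ is nonincreasing in $j'$, and by the ``at most one'' fact $J_j$ is the only remaining blue interval that can overlap $I_i$ by $\ge\Delta$. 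Thus advancing $i$ loses nothing, and the symmetric statement handles advancing $j$; I would pay particular attention here to the tie case $b_i=d_j$ and to the discrete $\pm 1$ conventions in the interval-length measure.

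Finally, I would verify the two output guarantees. Sortedness is immediate: since $i$ and $j$ only increase and the starting points $a_i,c_j$ are strictly increasing, the emitted starts $\max(a_i,c_j)$ are nondecreasing, so concatenating the per-vertex outputs in vertex order gives a list sorted lexicographically by vertex and then starting point. Maximality, as required by the definition of $\sqcap$, also follows cheaply from the overlap hypotheses: if $I_i\cap J_j\subseteq I_{i'}\cap J_{j'}$ for two emitted pairs, then $I_i\cap J_j\subseteq I_{i'}$ forces $|I_i\cap I_{i'}|\ge\Delta$ and hence $i=i'$, and symmetrically $j=j'$, so distinct emitted pairs yield incomparable intersections and every emitted interval is inclusion-maximal among the valid intersections.
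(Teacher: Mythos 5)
Your proposal is correct and follows essentially the same approach as the paper: the same per-vertex two-pointer merge that advances whichever interval has the smaller endpoint and emits intersections of size at least~$\Delta$. The only difference is cosmetic---you argue correctness via a loop invariant (``advancing never discards a pair with overlap $\geq\Delta$,'' using the at-most-one-partner fact), whereas the paper argues directly that every qualifying pair is eventually processed together; both hinge on the same pairwise-overlap hypotheses.
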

\begin{proof}
The $\Delta$-cut~$X \sqcap Y$ of two sets of vertex-interval pairs~$X$ and~$Y$ can be computed in the following way. 

 For every vertex $v$, we do the following:
\begin{compactenum}
\item Select the first vertex-interval pairs~$(v, I)$ and~$(v, I')$ from~$X$ and~$Y$, respectively.  
\item If~$|I\cap I'|>\Delta$, then add~$(v, I\cap I')$ to the output (the $\Delta$-cut). If the endpoint of~$I'$ is smaller than the endpoint of~$I$, then replace $(v, I')$ with the next vertex-interval pair in~$Y$, otherwise replace $(v, I)$ with the next vertex-interval pair in~$X$. 
\item Repeat Step~2 until all vertex-interval pairs containing vertex~$v$ are processed.
\end{compactenum}
Note that the intervals for each vertex~$v$ are added to the output in order of their starting point. Furthermore, by construction of the algorithm we have that for each $(v, I)$ in the output, $(v, I)$ is also in the $\Delta$-cut~$X \sqcap Y$. It remains to show that for all $(v, I)\in X$ and $(v, I')\in Y$ with $|I\cap I'|\ge\Delta$ we have that $(v, I\cap I')$ is included in the output. Let $I=[a, b]$ and $I' = [a', b']$. At some point, the procedure processes in Step 2 for the first time one of $(v, I) \in X$ or $(v, I') \in Y$. Without loss of generality, let $(v, I) \in X$ be processed first. If at the same time also $(v, I') \in Y$ is processed, clearly, $(v, I \cap I')$ is added to the output, as required. Now assume that Step 2 processes some other vertex-interval pair $(v, I''=[a'', b''])\in Y$, $a'' < a'$, together with $(v, I) \in X$. Since $|I\cap I'|\ge\Delta$ and $|I'\cap I''|<\Delta$ we have that $b''<b$ and hence, $(v, I)$ is not replaced in this step. Consequently, the procedure eventually adds~$(v, I\cap I')$ to the output.

In each step of the procedure at least one new vertex-interval pair is processed and each vertex-interval pair in $X$ and $Y$ is only processed once. Hence, the running time is in~$O(|X|+|Y|)$.
\end{proof}
Lemmata~\ref{lemma:cliquecount} and~\ref{lemma:deltacut} allow us to upper-bound the running time of \bkd{} depending on the number of different time-maximal $\Delta$-cliques of the input graph.
\begin{theorem}
\label{lemma:runningtime}
Let~$\mathbb{G}=(V,E,T)$ be a temporal graph with~$x$ distinct time-maximal $\Delta$-cliques. Then \bkd{} enumerates all \emph{maximal} $\Delta$-cliques in~$O(x\cdot |E| + |E|\cdot |T|)$ time.
\end{theorem}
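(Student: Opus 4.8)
The plan is to split the total running time into a one-time preprocessing phase and the cost of the recursion, and to charge the recursion against the number $x$ of time-maximal $\Delta$-cliques rather than against the depth or shape of the recursion tree.

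First I would bound the number of recursive calls. By Lemma~\ref{lemma:timeMax} the parameter $R$ of every recursive call of \bkd{} is a time-maximal $\Delta$-clique, and by Lemma~\ref{lemma:cliquecount} no time-maximal $\Delta$-clique occurs as the $R$-parameter of more than one call; hence distinct calls carry distinct $R$, so there are at most $x+1$ calls (the $+1$ for the trivial initial call with $R=(\emptyset,T)$). Since every non-initial call is spawned by exactly one iteration of a for-loop, and every for-loop iteration spawns exactly one call, the total number of for-loop iterations summed over the whole recursion is also $O(x)$. This amortized view is the crux: rather than bounding the work of one call and multiplying by a depth, I bound the work of a single for-loop iteration and multiply by the global iteration count $O(x)$.

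Next I would bound the cost of one for-loop iteration. For $(v,I')\in P$ the iteration computes the two $\Delta$-cuts $P\sqcap N^{\Delta}(v,I')$ and $X\sqcap N^{\Delta}(v,I')$. Provided $P$, $X$, and $N^{\Delta}(v,I')$ are kept in the sorted, pairwise-$<\!\Delta$-overlapping form required by Lemma~\ref{lemma:deltacut} (an invariant maintained automatically, since each $\Delta$-cut returns a sorted set), these two cuts cost $O(|P|+|X|+|N^{\Delta}(v,I')|)$ time; the set $N^{\Delta}(v,I')$ itself is obtained from a precomputed sorted $N^{\Delta}(v,T)$ by clamping each interval to $I'$ and discarding those shorter than $\Delta$, in $O(|N^{\Delta}(v,T)|)$ time. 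It remains to argue that each of $|P|$, $|X|$, and $|N^{\Delta}(v,I')|$ is $O(|E|)$. For a single vertex, $|N^{\Delta}(v,T)|\le\deg(v)\le|E|$, since for each $w$ the maximal $\Delta$-neighbor intervals of $(v,w)$ are separated by gaps exceeding $\Delta$ and are therefore injectively charged to distinct $vw$-edges. For $P$ and $X$ I would invoke Lemma~\ref{lemma:setPX}, which gives $P\cup X=\bigsqcap_{w\in C}N^{\Delta}(w,I)$, together with the fact (read off from the procedure in the proof of Lemma~\ref{lemma:deltacut}, whose number of steps is at most $|X|+|Y|$) that a single $\Delta$-cut outputs at most $|X|+|Y|$ pairs. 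Iterating this over the cut $\bigsqcap_{w\in C}N^{\Delta}(w,I)$ yields $|P\cup X|\le\sum_{w\in C}|N^{\Delta}(w,T)|\le\sum_{w\in C}\deg(w)\le 2|E|$. Consequently every for-loop iteration costs $O(|E|)$, and the whole recursion costs $O(x\cdot|E|)$.

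Finally, the preprocessing---sorting the time-edges and computing the sorted $\Delta$-neighborhoods $N^{\Delta}(v,T)$ for all $v\in V$, as well as the sorted initial set $P=\{(v,T)\mid v\in V\}$---can be carried out within $O(|E|\cdot|T|)$ time, which contributes the second summand. Adding the two phases gives the claimed $O(x\cdot|E|+|E|\cdot|T|)$ bound. I expect the main obstacle to be the amortized accounting of the third paragraph, where one must combine two facts simultaneously: that the global number of for-loop iterations is $O(x)$, and that the sizes of $P$, $X$, and the restricted neighborhoods remain $O(|E|)$ throughout the recursion. The latter hinges on the non-obvious point that repeated $\Delta$-cutting cannot inflate the total number of vertex-interval pairs beyond the summed neighborhood sizes, which is exactly what the output-size estimate derived from Lemma~\ref{lemma:deltacut} supplies.
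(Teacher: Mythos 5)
Your proposal is correct and follows essentially the same route as the paper's proof: bounding the number of recursive calls by~$x$ via Lemmas~\ref{lemma:timeMax} and~\ref{lemma:cliquecount}, charging the $\Delta$-neighborhood and $\Delta$-cut computations (each $O(|E|)$ by Lemma~\ref{lemma:deltacut}) to the spawned call, and adding $O(|E|\cdot|T|)$ preprocessing for sorting and precomputing the $N^{\Delta}(v,T)$. The only difference is that you spell out the $O(|E|)$ size bounds on $P$, $X$, and the neighborhoods in more detail than the paper, which simply asserts them.
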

\begin{proof}
We assume that all edges of the temporal graph are sorted by their time stamp. Note that this can be done in a preprocessing step in~$O(|E|\cdot |T|)$ time using Counting Sort. Furthermore, we assume that for each vertex~$v$, the $\Delta$-neighborhood~$N^{\Delta}(v,T)$ is given. These neighborhoods can be precomputed in~$O(|E|)$ time, assuming that the edges are sorted by their time stamps.

By Lemma~\ref{lemma:cliquecount} we know that for each time-maximal $\Delta$-clique there is at most one recursive call of \bkd{}.
By charging the computation of~$P'$, $R'$, and $X'$ to the corresponding recursive call, for each recursive call we compute a constant number of $\Delta$-neighborhoods and $\Delta$-cuts. The size of the sets~$P$,~$X$, and any $\Delta$-neighborhood is upper-bounded by~$|E|$ and each of these sets has the property that for every~$(v, I)$ and~$(v, I')$ out of the same set we have that~$|I\cap I'|<\Delta$. Given~$N^{\Delta}(v,T)$,~$N^{\Delta}(v,I)$ can be computed in~$O(|E|)$~time for any~$I$ and by Lemma~\ref{lemma:deltacut}, a $\Delta$-cut can be computed in~$O(|E|)$ time. Hence, all maximal $\Delta$-cliques can be enumerated in~$O(x\cdot|E| + |E|\cdot |T|)$ time.
\end{proof}
We now use a general upper bound for the number of time-maximal $\Delta$-cliques in a temporal graph to bound the overall running time of \bkd{}. 
\begin{corollary}
  Let $\mathbb{G}=(V,E,T)$ be a temporal graph. \bkd{} enumerates all maximal $\Delta$-cliques of $\mathbb{G}$ in~$O(2^{|V|} \cdot |T| \cdot |E|)$ time.
\end{corollary}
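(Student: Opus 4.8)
The plan is to combine Theorem~\ref{lemma:runningtime} with a crude counting bound on the number~$x$ of distinct time-maximal $\Delta$-cliques of~$\mathbb{G}$. By Theorem~\ref{lemma:runningtime}, \bkd{} runs in $O(x \cdot |E| + |E| \cdot |T|)$ time, so it suffices to show that $x = O(2^{|V|} \cdot |T|)$; substituting this bound for~$x$ immediately yields the claimed running time $O(2^{|V|} \cdot |T| \cdot |E|)$, since the additive $|E| \cdot |T|$ term is dominated.

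To bound~$x$, I would group the time-maximal $\Delta$-cliques by their vertex set. Every time-maximal $\Delta$-clique has the form $(C, I)$ with $C \subseteq V$, and there are at most $2^{|V|}$ possible choices for~$C$. Hence it remains to argue that, for a fixed vertex set~$C$, at most $O(|T|)$ distinct intervals~$I$ can yield a time-maximal $\Delta$-clique $(C, I)$; multiplying the two bounds then gives $x \le 2^{|V|} \cdot O(|T|)$.

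For the per-vertex-set bound, I would consider the set $S_C \subseteq T$ of time points~$\tau$ such that every pair $v, w \in C$ with $v \neq w$ is joined by some time-edge in the window $[\tau, \tau + \Delta]$. By the definition of a $\Delta$-clique, $(C, [a, b])$ is a $\Delta$-clique precisely when $[a, b - \Delta] \subseteq S_C$, and it is time-maximal precisely when $[a, b - \Delta]$ is an inclusion-maximal interval of consecutive integers contained in~$S_C$. The key observation is that two distinct such maximal runs must be separated by at least one time point lying outside~$S_C$ (otherwise their union would again be contained in~$S_C$, contradicting maximality), so the runs are pairwise disjoint. Therefore the number of time-maximal intervals for a fixed~$C$ equals the number of maximal runs in~$S_C$, which is at most~$|T|$. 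This establishes $x \le 2^{|V|} \cdot |T|$ and finishes the argument.

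The only mildly delicate point is the final step: one must verify that the time-maximality condition on $(C, I)$ translates exactly into $[a, b - \Delta]$ being a maximal run of~$S_C$, and that distinct time-maximal cliques over the same vertex set~$C$ genuinely induce pairwise disjoint runs. Once this correspondence is in place, the rest is a direct substitution into Theorem~\ref{lemma:runningtime}.
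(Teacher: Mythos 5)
Your proposal is correct and takes essentially the same route as the paper: both bound the number $x$ of time-maximal $\Delta$-cliques by $2^{|V|}\cdot|T|$ (at most $2^{|V|}$ vertex sets, at most $|T|$ time-maximal intervals per vertex set) and then substitute into Theorem~\ref{lemma:runningtime}. The only cosmetic difference is in justifying the per-vertex-set bound of $|T|$: you argue via pairwise-disjoint maximal runs of the set $S_C$ of valid window starting points, whereas the paper observes that the intervals are pairwise not contained in one another; both yield the same count.
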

\begin{proof}
Note that the vertex set of each maximal $\Delta$-clique induces a static clique in the static graph~$G$ underlying~$\mathbb{G}$ that has an edge between two vertices if and only if there is a time-edge in~$\mathbb{G}$ between these vertices at some time step. Furthermore, for each clique in $G$, there are at most $|T|$ maximal $\Delta$-cliques because their time intervals are pairwise not contained in one-another. Hence, the number of time-maximal $\Delta$-cliques of any temporal graph is upper-bounded by~$2^{|V|} \cdot |T|$. By Theorem~\ref{lemma:runningtime}, we get an overall running time in~$O(2^{|V|} \cdot |T| \cdot |E|)$.
\end{proof}
\subsection{Pivoting}
\label{subsection:pivotingBKD}
In this section, we explain how we can decrease the number of recursive calls of \bkd\ by using pivoting. Recall that the idea of pivoting in the classic Bron-Kerbosch algorithm for static graphs is based on the observation that for any vertex~$u \in P \cup X$ either~$u$ itself or one of its non-neighbors must be contained in any maximal clique containing~$R$. Vertex~$u$ is also called \emph{pivot}. 

A similar observation holds for maximal $\Delta$-cliques in temporal graphs. Instead of vertices, however, we now choose vertex-interval pairs as pivots: For any~$(v_p,I_p) \in P \cup X$ and any maximal $\Delta$-clique~$R_{\max} = (C_{\max}, I_{\max})$ with~$I_{\max} \subseteq I_p$, either vertex~$v_p$ or one vertex~$w\neq v_p$ which is not a $\Delta$-neighbor of~$v_p$ during the time~$I_{\max}$, that is,~$(w,I_{\max}) \not \sqin N^{\Delta}(v_p,I_p)$, must be contained in~$C_{\max}$. 
By choosing a pivot element~$(v_p,I_p) \in X \cup P$ we only have to iterate over all elements in~$P$ which are not in the $\Delta$-neighborhood of the pivot element, see Algorithm~\ref{alg:bronkerdeltaPivot}. In other words, we do not have to make a recursive call for any~$(w,I') \in P$ which holds~$(w,I') \sqin N^{\Delta}(v_p,I_p)$. 

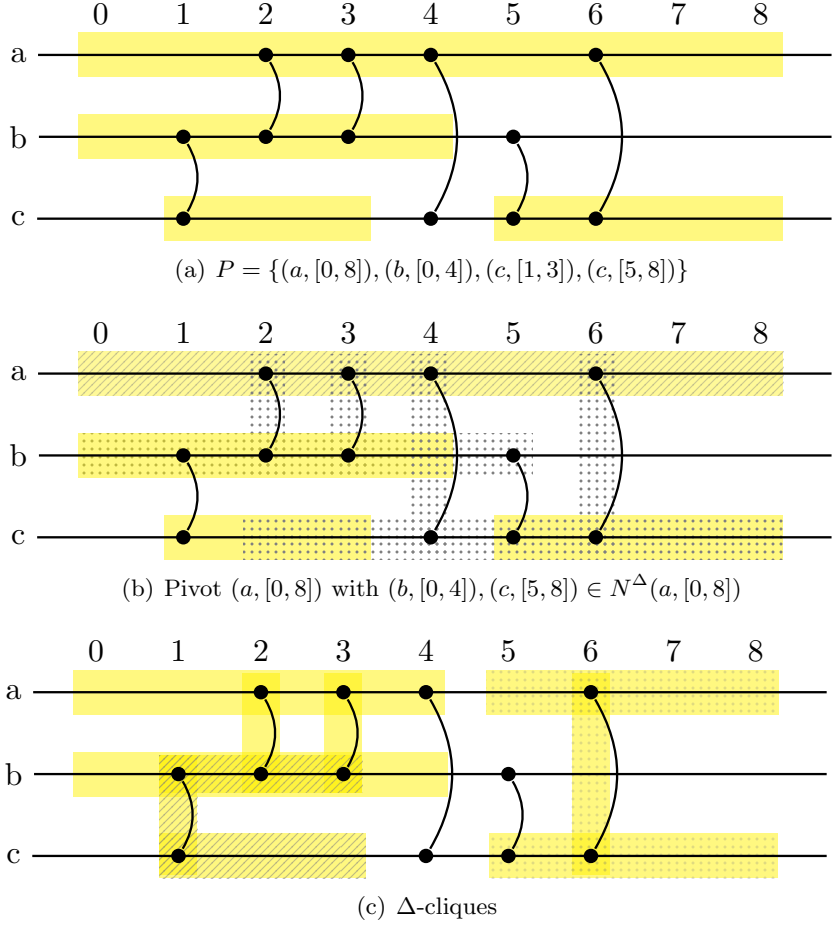
\begin{figure}[t]%
\begin{center}
\subfigure[$P = \{(a,\lbrack 0,8 \rbrack), (b,\lbrack 0,4\rbrack),(c,\lbrack 1,3\rbrack),(c,\lbrack 5,8\rbrack)\}$]{
 \resizebox{0.9\textwidth}{!}{%
        \begin{tikzpicture}[thick]
    \tikzstyle{phase} = [fill,shape=circle,minimum size=5pt,inner sep=0pt]
    \node (time0) at (0,0.5) {0};
    \node (time1) at (1,0.5) {1};
    \node (time2) at (2,0.5) {2};
    \node (time3) at (3,0.5) {3};
    \node (time4) at (4,0.5) {4};
    \node (time5) at (5,0.5) {5};
    \node (time6) at (6,0.5) {6};
    \node (time7) at (7,0.5) {7};
    \node (time8) at (8,0.5) {8};
    \node at (-1,0) (q1) {a};
    \node at (-1,-1) (q2) {b};
    \node at (-1,-2) (q3) {c};
    \node (node0a) at (0,0)  {};
    \node (node0b) at (0,-1) {};
    \node (node0c) at (0,-2) {};
    \node (node1c) at (1,-2) {};
    \node (node6a) at (6,0)  {};
    \node (node4b) at (4,-1) {};
    \node (node3c) at (3,-2) {};
    \node (node5c) at (5,-2) {};
    \node (node7c) at (7,-2) {};
    \node (node8a) at (8,0)  {};
    \node (node8b) at (8,-1) {};
    \node (node8c) at (8,-2) {};

    \node[phase] (node1b) at (1,-1) {};
    \node[phase] (node1c) at (1,-2) {};
    \draw[-] (node1b) to[out=-60,in=60] (node1c);    
    \node[phase] (node2a) at (2,0) {};
    \node[phase] (node2b) at (2,-1) {};
    \draw[-] (node2a) to[out=-60,in=60] (node2b);
    \node[phase] (node3a) at (3,0) {};
    \node[phase] (node3b) at (3,-1) {};
    \draw[-] (node3a) to[out=-60,in=60] (node3b);
    \node[phase] (node4a) at (4,0) {};
    \node[phase] (node4c) at (4,-2) {};
    \draw[-] (node4a) to[out=-60,in=60] (node4c);
    \node[phase] (node5b) at (5,-1) {};
    \node[phase] (node5c) at (5,-2) {};
    \draw[-] (node5b) to[out=-60,in=60] (node5c);
    \node[phase] (node6a) at (6,0) {};
    \node[phase] (node6c) at (6,-2) {};
    \draw[-] (node6a) to[out=-60,in=60] (node6c);
    \node (end1) at (9,0) {} edge [-] (q1);
    \node (end2) at (9,-1) {} edge [-] (q2);
    \node (end3) at (9,-2) {} edge [-] (q3);    
    \begin{pgfonlayer} {background}
    \node (background) [fill=yellow, fill opacity=0.5, fit = (node0a) (node8a)] {};
    \node (background) [fill=yellow , fill opacity=0.5, fit = (node0b) (node4b)] {};
    \node (background) [fill=yellow, fill opacity=0.5, fit = (node5c) (node8c)] {};
    \node (background) [fill=yellow, fill opacity=0.5, fit = (node1c) (node3c)] {};
    \end{pgfonlayer}
    \end{tikzpicture}
    \label{figure:PElements}
  }
}
\subfigure[Pivot $(a, \lbrack 0,8 \rbrack)$ with $(b,\lbrack 0,4\rbrack),(c,\lbrack 5,8\rbrack) \in  N^{\Delta}(a, \lbrack 0,8 \rbrack)$]{
 \resizebox{0.9\textwidth}{!}{%
        \begin{tikzpicture}[thick]
    \tikzstyle{phase} = [fill,shape=circle,minimum size=5pt,inner sep=0pt]
    \node (time0) at (0,0.5) {0};
    \node (time1) at (1,0.5) {1};
    \node (time2) at (2,0.5) {2};
    \node (time3) at (3,0.5) {3};
    \node (time4) at (4,0.5) {4};
    \node (time5) at (5,0.5) {5};
    \node (time6) at (6,0.5) {6};
    \node (time7) at (7,0.5) {7};
    \node (time8) at (8,0.5) {8};
    \node at (-1,0) (q1) {a};
    \node at (-1,-1) (q2) {b};
    \node at (-1,-2) (q3) {c};
    \node (node0a) at (0,0)  {};
    \node (node0b) at (0,-1) {};
    \node (node0c) at (0,-2) {};
    \node (node1c) at (1,-2) {};
    \node (node2c) at (2,-2) {};
    \node (node6a) at (6,0)  {};
    \node (node4b) at (4,-1) {};
    \node (node3c) at (3,-2) {};
    \node (node5c) at (5,-2) {};
    \node (node7c) at (7,-2) {};
    \node (node8a) at (8,0)  {};
    \node (node8b) at (8,-1) {};
    \node (node8c) at (8,-2) {};

    \node[phase] (node1b) at (1,-1) {};
    \node[phase] (node1c) at (1,-2) {};
    \draw[-] (node1b) to[out=-60,in=60] (node1c);    
    \node[phase] (node2a) at (2,0) {};
    \node[phase] (node2b) at (2,-1) {};
    \draw[-] (node2a) to[out=-60,in=60] (node2b);
    \node[phase] (node3a) at (3,0) {};
    \node[phase] (node3b) at (3,-1) {};
    \draw[-] (node3a) to[out=-60,in=60] (node3b);
    \node[phase] (node4a) at (4,0) {};
    \node[phase] (node4c) at (4,-2) {};
    \draw[-] (node4a) to[out=-60,in=60] (node4c);
    \node[phase] (node5b) at (5,-1) {};
    \node[phase] (node5c) at (5,-2) {};
    \draw[-] (node5b) to[out=-60,in=60] (node5c);
    \node[phase] (node6a) at (6,0) {};
    \node[phase] (node6c) at (6,-2) {};
    \draw[-] (node6a) to[out=-60,in=60] (node6c);
    \node (end1) at (9,0) {} edge [-] (q1);
    \node (end2) at (9,-1) {} edge [-] (q2);
    \node (end3) at (9,-2) {} edge [-] (q3);    
    \begin{pgfonlayer} {background}
    \node (background) [fill=yellow , fill opacity=0.4, fit = (node0a) (node8a),postaction={pattern=north east lines, pattern color=gray}] {};
    \node (background) [ fit = (node0b) (node5b), postaction={pattern=dots, pattern color=gray}] {};
    \node (background) [ fit = (node2a) (node2b), , postaction={pattern=dots, pattern color=gray}] {};
    \node (background) [ fit = (node3a) (node3b), , postaction={pattern=dots, pattern color=gray}] {};
     \node (background) [fill=yellow , fill opacity=0.5, fit = (node0b) (node4b), ] {};
    \node (background) [fill=yellow, fill opacity=0.5, fit = (node5c) (node8c)] {};
    \node (background) [fill=yellow, fill opacity=0.5, fit = (node1c) (node3c)] {};
    \node (background) [fit = (node2c) (node8c), postaction={pattern=dots, pattern color=gray}] {};
    \node (background) [ fit = (node4a) (node4c), postaction={pattern=dots, pattern color=gray}] {};
    \node (background) [ fit = (node6a) (node6c), postaction={pattern=dots, pattern color=gray}] {};
    \end{pgfonlayer}
    \end{tikzpicture}
    \label{figure:PivotElementWithDeltaNeighborhood}
  }
}
\subfigure[$\Delta$-cliques]{
 \resizebox{0.9\textwidth}{!}{%
        \begin{tikzpicture}[thick]
    \tikzstyle{phase} = [fill,shape=circle,minimum size=5pt,inner sep=0pt]
    \node (time0) at (0,0.5) {0};
    \node (time1) at (1,0.5) {1};
    \node (time2) at (2,0.5) {2};
    \node (time3) at (3,0.5) {3};
    \node (time4) at (4,0.5) {4};
    \node (time5) at (5,0.5) {5};
    \node (time6) at (6,0.5) {6};
    \node (time7) at (7,0.5) {7};
    \node (time8) at (8,0.5) {8};
    \node at (-1,0) (q1) {a};
    \node at (-1,-1) (q2) {b};
    \node at (-1,-2) (q3) {c};
    \node (node0a) at (0,0)  {};
    \node (node0b) at (0,-1) {};
    \node (node0c) at (0,-2) {};
    \node (node1c) at (1,-2) {};
    \node (node2c) at (2,-2) {};
    \node (node5a) at (5,0)  {};
    \node (node6a) at (6,0)  {};
    \node (node4b) at (4,-1) {};
    \node (node3c) at (3,-2) {};
    \node (node5c) at (5,-2) {};
    \node (node7c) at (7,-2) {};
    \node (node8a) at (8,0)  {};
    \node (node8b) at (8,-1) {};
    \node (node8c) at (8,-2) {};

    \node[phase] (node1b) at (1,-1) {};
    \node[phase] (node1c) at (1,-2) {};
    \draw[-] (node1b) to[out=-60,in=60] (node1c);    
    \node[phase] (node2a) at (2,0) {};
    \node[phase] (node2b) at (2,-1) {};
    \draw[-] (node2a) to[out=-60,in=60] (node2b);
    \node[phase] (node3a) at (3,0) {};
    \node[phase] (node3b) at (3,-1) {};
    \draw[-] (node3a) to[out=-60,in=60] (node3b);
    \node[phase] (node4a) at (4,0) {};
    \node[phase] (node4c) at (4,-2) {};
    \draw[-] (node4a) to[out=-60,in=60] (node4c);
    \node[phase] (node5b) at (5,-1) {};
    \node[phase] (node5c) at (5,-2) {};
    \draw[-] (node5b) to[out=-60,in=60] (node5c);
    \node[phase] (node6a) at (6,0) {};
    \node[phase] (node6c) at (6,-2) {};
    \draw[-] (node6a) to[out=-60,in=60] (node6c);
    \node (end1) at (9,0) {} edge [-] (q1);
    \node (end2) at (9,-1) {} edge [-] (q2);
    \node (end3) at (9,-2) {} edge [-] (q3);    
    \begin{pgfonlayer} {background}
    \node (background) [fill=yellow, fill opacity=0.5, fit = (node0a) (node4a)] {};
    \node (background) [fill=yellow, fill opacity=0.5, fit = (node0b) (node4b)] {};
    \node (background) [fill=yellow, fill opacity=0.5, fit = (node2a) (node2b)] {};
    \node (background) [fill=yellow, fill opacity=0.5, fit = (node3a) (node3b)] {};

    \node (background) [fill=yellow , fill opacity=0.5, fit = (node5a) (node8a),postaction={pattern=dots, pattern color=gray}] {};
    \node (background) [fill=yellow , fill opacity=0.5, fit = (node5c) (node8c),postaction={pattern=dots, pattern color=gray}] {};
    \node (background) [fill=yellow , fill opacity=0.5, fit = (node6a) (node6c),postaction={pattern=dots, pattern color=gray}] {};
    
    \node (background) [fill=yellow, fill opacity=0.5, fit = (node1b) (node3b), postaction={pattern=north east lines, pattern color=gray}] {};
    \node (background) [fill=yellow, fill opacity=0.5, fit = (node1c) (node3c), postaction={pattern=north east lines, pattern color=gray}] {};
    \node (background) [fill=yellow, fill opacity=0.5, fit = (node1b) (node1c), postaction={pattern=north east lines, pattern color=gray}] {};

    \end{pgfonlayer}
    \end{tikzpicture}
    \label{figure:DeltaCliques}
  }
}
\caption{A exemplary set $P$ of \bkdp{}, a possible pivot element (hatched) including its $\Delta$-neighborhood (dotted), and all maximal $\Delta$-cliques with respect to set~$P$,~$\Delta~=~2$.}
    \label{figure:Pivoting}
\end{center}
\end{figure}
In Figure~\ref{figure:Pivoting} we give an illustrative example for pivoting.
In this example, we assume that the algorithm runs on a temporal graph such that the set~$P= \{(a,[0,8 ]),(b,[0,4]),(c,[1,3]),(c,[5,8])\}$ occurs within a recursive call of \bkdp{}. For simplicity, we show in Figure~\ref{figure:PElements} only the subgraph containing the elements of $P$ and the relation between these elements rather than displaying the whole graph. 
In Figure~\ref{figure:PivotElementWithDeltaNeighborhood}, we choose element~$(a,[0,8])$ (hatched) as pivot. It can be seen that the elements~$(b,[0,4])$ and $(c,[5,8])$ lie completely in the $\Delta$-neighborhood (dotted) of the pivot, that is, $(b,[0,4]),(c,[5,8]) \sqin  N^{\Delta}(a, [0,8])$. These two elements can therefore be left out in the iteration over the elements in $P$ of the \bkd. We only have to iterate over the pivot~$(a,[0,8])$ and the element~$(c,[1,3])$ which is not completely in the $\Delta$-neighborhood of our chosen pivot. 
In Figure~\ref{figure:DeltaCliques}, we can see that for every maximal \dC~$(C,I)$ with respect to~$P$ either~$a \in C$, $I \subseteq [0,8]$ or~$c \in C$, $I \subseteq [1,3]$. The figure hence shows that iterating over the elements $(b,[0,4])$ and $(c,[5,8])$ will not find any maximal \dC\ that we do not find via one of the elements $(a,[0,8])$ and $(c,[1,3])$.

Next, we formally prove the correctness of this procedure.%
\begin{proposition}
\label{lemma:pivoting}
For each $\Delta$-clique $R=(C,I)$ and a pivot element $(v_p,I_p) \in P \cup X$, the following holds: for every  $R_{\max} = (C_{\max}, I_{\max})$ with $C \subset C_{\max}$ and~$I_{\max} \subseteq I_p \subseteq I$ it either holds that $v_p \in C_{\max}$ or otherwise there is a vertex~$w \in C_{\max}$ that satisfies $(w,I') \in P \cup X $, $I_{\max} \subseteq I'$, and $(w,I_{\max}) \not \sqin N^{\Delta}(v_p,I_p)$, and consequently $(w,I') \not \sqin N^{\Delta}(v_p,I_p)$.
\end{proposition}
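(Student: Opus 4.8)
The plan is to prove the contrapositive of the disjunction: assuming $v_p \notin C_{\max}$, I will produce the witness vertex~$w$, by deriving a contradiction from its nonexistence. This mirrors the static pivoting argument---if every candidate vertex were a neighbor of the pivot, the pivot could be added to the maximal clique, violating maximality---but the interval bookkeeping makes each step more delicate than in the static case.

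First I would record two monotonicity facts. For temporal membership: if $(x, J) \sqin Y$ and $J' \subseteq J$ with $|J'| \ge \Delta$, then $(x, J') \sqin Y$. For $\Delta$-neighbors: if $x$ is a $\Delta$-neighbor of $y$ during some interval, then it is a $\Delta$-neighbor during every sub-interval of length at least~$\Delta$. These two facts let me freely restrict from $I_p$ (or from a witness interval $I'$) down to $I_{\max}$, and they will also supply the final ``consequently'' clause. Using Lemma~\ref{lemma:setPX}, I would then identify $P \cup X$ with $\bigsqcap_{v \in C} N^{\Delta}(v, I)$, treating the initial call $C = \emptyset$ directly since there $P \cup X = \{(v,T) \mid v \in V\}$. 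From $(v_p, I_p) \in P \cup X$ together with $I_{\max} \subseteq I_p$, this yields that $v_p$ is a $\Delta$-neighbor of every $v \in C$ during $I_{\max}$.

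Next I would handle the new vertices. Since $R_{\max} = (C_{\max}, I_{\max})$ is a \dC{} and $C \subset C_{\max}$, every $w \in C_{\max} \setminus C$ is a $\Delta$-neighbor of each $v \in C$ during $I_{\max}$, and $I_{\max} \subseteq I_p \subseteq I$; hence $(w, I_{\max}) \sqin N^{\Delta}(v,I)$ for each $v \in C$, so $(w, I_{\max}) \sqin \bigsqcap_{v \in C} N^{\Delta}(v, I) = P \cup X$. Consequently some $(w, I') \in P \cup X$ with $I_{\max} \subseteq I'$ exists, which establishes the membership clause of the desired witness for every new vertex. Now suppose for contradiction that no witness exists: then every such $w$ satisfies $(w, I_{\max}) \sqin N^{\Delta}(v_p, I_p)$, i.e.\ $w$ is a $\Delta$-neighbor of $v_p$ during $I_{\max}$. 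Combined with the statement for $v \in C$, this makes $v_p$ a $\Delta$-neighbor of every vertex of $C_{\max}$ during $I_{\max}$, so $(C_{\max} \cup \{v_p\}, I_{\max})$ is a \dC{} with a strictly larger vertex set and the same interval---contradicting the vertex-maximality of $R_{\max}$.

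Therefore a witness $w \in C_{\max}$ with $(w, I_{\max}) \not\sqin N^{\Delta}(v_p, I_p)$ and $(w, I') \in P \cup X$, $I_{\max} \subseteq I'$, must exist. The final clause $(w, I') \not\sqin N^{\Delta}(v_p, I_p)$ then follows by applying the membership-monotonicity fact contrapositively to $I_{\max} \subseteq I'$. I expect the main obstacle to be the third paragraph: carefully arguing that each newly added vertex $w$ genuinely appears in $P \cup X$ with an interval covering $I_{\max}$, since this is exactly the point where Lemma~\ref{lemma:setPX} and the chain $I_{\max} \subseteq I_p \subseteq I$ are indispensable, and where one must take care that the $\Delta$-cut and $\Delta$-neighborhood operations interact correctly with the restriction of intervals.
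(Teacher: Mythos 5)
Your proposal is correct and follows essentially the same route as the paper's proof: assume $v_p \notin C_{\max}$ and that every $w \in C_{\max}$ satisfies $(w,I_{\max}) \sqin N^{\Delta}(v_p,I_p)$, then conclude that $v_p$ could be added to $R_{\max}$ over $I_{\max}$, contradicting vertex-maximality. The only difference is presentational: you make explicit the appeal to Lemma~\ref{lemma:setPX} (with the $C=\emptyset$ case handled separately) and the interval-monotonicity facts that the paper leaves implicit, which is a sound elaboration rather than a different argument.
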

\begin{proof}
Let $R_{\max} = (C_{\max}, I_{\max})$ be a maximal $\Delta$-clique with $C \subset C_{\max}$ and~$I_{\max} \subseteq I_p \subseteq I$. Assume that $v_p \notin C_{\max}$ and for each $w \in C_{\max}$ it holds that~$(w ,I_{\max}) \sqin N^{\Delta}(v_p,I_p)$. Consequently, for each $w \in C_{\max} \setminus C$ there exists a $(w,I') \in P \cup X$ with $I_{\max} \subseteq I'$. Because $v_p$ is a $\Delta$-neighbor of all vertices in~$C_{\max} \setminus C$ at least during~$I_{\max}$ and a $\Delta$-neighbor of all vertices in $C$ during~$I_p$, the vertex~$v_p$ can be added to the $\Delta$-clique $R_{\max}$, yielding another \dC\ with at least the same lifetime. This is a contradiction to the assumption that~$R_{\max}$ is maximal.
\end{proof}

\begin{algorithm}[t]
\begin{algorithmic}[1]%
\Function{BronKerboschDeltaPivot}{$P, R=(C,I), X$}
 \Comment{ $R=(C,I):$ time-maximal $\Delta$-clique}
 \Comment{ $P \cup X:$ set of all  $(v,I')$ s.t.\ $I' \subseteq I$ and $(C \cup \{v\}, I')$ is a time-maximal $\Delta$-clique}
\If{$\forall (w, I') \in P \cup X \colon I' \subsetneq I$}
	\State{add~$R$ to the solution}
\EndIf
\State choose pivot element $(v_p,I_p) \in P \cup X$
\For{$(v,I') \in P \setminus \{(w,I'') \mid (w,I'') \in P \wedge (w, I'') \sqin N^{\Delta}(v_p,I_p) \}$}
	\State{$R' \gets (C \cup \{v\}, I')$}
	\State{$P' \gets P \sqcap N^{\Delta}(v,I')$}
	\State{$X' \gets X  \sqcap N^{\Delta}(v,I')$}
	\State{\Call{BronKerboschDeltaPivot}{$P', R', X'$}}

	\State{$P \gets P \setminus (v,I')$}
	\State{$X \gets X \cup (v,I')$}
\EndFor
\EndFunction
\end{algorithmic}
\caption{Enumerating all Maximal $\Delta$-Cliques in a Temporal Graph with Pivoting}
\label{alg:bronkerdeltaPivot}
\end{algorithm}

An optimal pivot element is chosen in such a way that it minimizes the number of recursive calls. It is the element in the set~$P \cup X$ having the largest number of elements in~$P$ in its $\Delta$-neighborhood. We have seen that the whole procedure is quite similar to pivoting in the basic Bron-Kerbosch algorithm but with one difference: we are able to choose more than one pivot element. The only condition that has to be satisfied is that the time intervals of the pivot elements cannot overlap:

  For each $\Delta$-clique~$R=(C,I)$ in a recursive call of the algorithm, choosing a pivot element~$(v_p,I_p) \in P \cup X$ only affects maximal $\Delta$-cliques~$R_{\max} = (C_{\max}, I_{\max})$ fulfilling~$I_{\max} \subseteq I_p$. Moreover, for all elements~$(w,I') \in P$ satisfying~$(w,I') \sqin N^{\Delta}(v_p,I_p)$ it holds~$I' \subseteq I_p$. Consequently, a further pivot element~$(v_p',I_p') \in P\cup X$ fulfilling that~$I_p'$ does not overlap with~$I_p$ neither interferes with the considered maximal $\Delta$-cliques nor with the vertex-interval pairs in~$P$ that are in the $\Delta$-neighborhood of the pivot element~$(v_p,I_p)$. 

The problem of finding the optimal set of pivot elements in~$P \cup X$ can be formulated as a weighted interval scheduling maximization problem:

\problemdef{Weighted Interval Scheduling}{A set~$J$ of jobs~$j$ with a time interval~$I_j$ and a weight~$w_j$ each.}{Find a subset of jobs~$J' \subseteq J$ that maximizes~$\sum_{j \in J'}w_j$ such that for all~$i, j \in J'$ with~$i \not = j$, the time intervals~$I_i$ and~$I_j$ do not overlap.}

In our problem, the jobs are the elements of~$P \cup X$ and the weight of an element is thereby the number of all elements that are in~$P$ and lie in the $\Delta$-neighborhood of this element. Formally, the jobs are the elements~$(v,I') \in P \cup X$, the corresponding time interval is~$I'$ of the element~$(v,I')$ and the corresponding weight~$w_{(v,I')}= | \{(v,I) \mid (v,I) \in P \wedge (v,I) \sqin N^{\Delta}(v,I')\} |$. This problem can be solved efficiently in~$O(\min(|E|, |V|\cdot |T|) \cdot \log (\min(|E|, |V|\cdot |T|)))$ time by using dynamic programming~\cite[Chapter~6.1]{kleinberg2006algorithm} under the assumption that the weights of the potential pivot elements are known.
\section{Degeneracy of Temporal Graphs}
\label{sec:degeneracy}
  Recall from Section~\ref{subsection:BKDegeneracy} that one can upper-bound the running time of the static Bron-Kerbosch algorithm using the degeneracy of the input graph. The degeneracy of a graph~$G$ is the smallest integer~$d$ such that every non-empty subgraph of~$G$ contains a vertex of degree at most~$d$. We now give an analogue for the temporal setting, motivated by the fact that static graphs are often sparse in practice as measured by small degeneracy~\cite{ELS13}. Intuitively, we want to capture the fact that a temporal graph keeps its degeneracy value during its whole lifetime.

\begin{definition}[$\Delta$-slice degeneracy]
A temporal graph~$\mathbb{G}=(V,E,T)$ has $\Delta$-slice degeneracy~$d$ if for all~$t \in T$ we have that the graph~$G_t=(V, E_t)$, where~$E_t = \{\{v, w\} \ | \ (\{u, w\}, t') \in E \text{ for some } t' \in [t, t+\Delta]\}$, has degeneracy at most~$d$.
\end{definition}

Using the parameter $\Delta$-slice degeneracy, we can upper-bound the number of time-maximal $\Delta$-cliques of a temporal graph.%

\begin{lemma}
\label{lemma:cliquecountdeg}
Let~$\mathbb{G}=(V,E,T)$ be a temporal graph with $\Delta$-slice degeneracy~$d$. Then, the number of time-maximal $\Delta$-cliques in~$\mathbb{G}$ is at most~$3^{d/3}\cdot 2^{d+1}\cdot |V|\cdot |T|$.
\end{lemma}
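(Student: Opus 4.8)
The plan is to bound the number of time-maximal $\Delta$-cliques by reducing it to counting ordinary (not necessarily maximal) cliques in each \emph{slice graph} $G_t=(V,E_t)$, each of which is $d$-degenerate by the definition of \dsd. Recall from Section~\ref{sec:preliminaries} that a tuple $(X,[a,b])$ with $b-a\ge\Delta$ is a \dC\ if and only if $X$ is a clique in every slice graph $G_\tau$ for $\tau\in[a,b-\Delta]$. For a fixed vertex set $X$, let $S_X=\{\tau\in[\alpha,\omega-\Delta]\mid X \text{ is a clique in } G_\tau\}$. Then $(X,[a,b])$ is a \dC\ exactly when $[a,b-\Delta]\subseteq S_X$, and it is \emph{time-maximal} exactly when $[a,b-\Delta]$ is an inclusion-maximal interval of consecutive integers contained in $S_X$ (bounded by the domain $[\alpha,\omega-\Delta]$). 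First I would establish this correspondence, which is immediate from the definitions of \dC\ and time-maximality (taking $X'=X$ in the definition of time-maximality forces the interval to be non-extendable within $S_X$).

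The key step is then a charging argument: map every time-maximal \dC\ $(X,[a,b])$ to the pair $(a,X)$ consisting of its starting point and its vertex set. I claim this map is injective. Indeed, distinct maximal runs of $S_X$ have distinct starting points, since two maximal runs sharing a starting point would be nested and hence equal; thus for fixed $X$ and fixed $a$ there is at most one time-maximal \dC\ $(X,[a,b])$, and its endpoint $b$ is determined as $\Delta$ plus the right end of the run of $S_X$ starting at $a$. Moreover, whenever $(X,[a,b])$ is time-maximal we have $a\in[a,b-\Delta]\subseteq S_X$, so $X$ is a clique in $G_a$. Consequently the number of time-maximal \dC s is at most $\sum_{a\in[\alpha,\omega-\Delta]}|\{X\subseteq V\mid X \text{ is a clique in } G_a\}|$, that is, at most $|T|$ times the maximum number of cliques in a single slice graph. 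This charging is the part I expect to require the most care, since one must verify that time-maximality translates precisely into run-maximality in $S_X$ and that the starting point alone pins down the interval.

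It remains to bound the number of cliques in a fixed $d$-degenerate slice graph $G_a$. Every clique of $G_a$ is contained in some maximal clique, and a $d$-degenerate graph has maximal cliques of size at most $d+1$, so each maximal clique has at most $2^{d+1}$ subsets. Using the bound of \citet{ELS13} cited in Section~\ref{subsection:BKDegeneracy}, the $d$-degenerate graph $G_a$ on $|V|$ vertices has at most $(|V|-d)\,3^{d/3}\le |V|\cdot 3^{d/3}$ maximal cliques. Hence the total number of cliques in $G_a$ is at most $|V|\cdot 3^{d/3}\cdot 2^{d+1}$. Summing over the at most $|T|$ choices of $a$ yields the claimed bound $3^{d/3}\cdot 2^{d+1}\cdot |V|\cdot |T|$ on the number of time-maximal \dC s.
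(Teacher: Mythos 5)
Your proposal is correct and takes essentially the same route as the paper's proof: both charge each time-maximal $\Delta$-clique to its starting point $a$ together with its vertex set $X$ (which must be a clique in the $\Delta$-slice $G_a$), bound the number of cliques in a $d$-degenerate slice by $3^{d/3}\cdot 2^{d+1}\cdot |V|$ using the maximal-clique bound of \citet{ELS13} and the fact that cliques have size at most $d+1$, and then multiply by $|T|$. Your explicit formalization of $S_X$, run-maximality, and the injectivity of $(X,[a,b])\mapsto(a,X)$ simply fills in details that the paper compresses into the single sentence ``for each of those cliques we have at most one time-maximal $\Delta$-clique starting at time~$t$.''
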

\begin{proof}
Let~$\mathbb{G}=(V,E,T)$ be a temporal graph with $\Delta$-slice degeneracy~$d$. Then we call the graph~$G_t=(V, E_t)$, where~$E_t = \{\{v, w\} \ | \ (\{u, w\}, t') \in E \text{ for some } t' \in [t, t+\Delta]\}$ a \emph{$\Delta$-slice} of~$\mathbb{G}$ at time~$t$. The vertex set of each time-maximal $\Delta$-clique which starts at time~$t$ is also a clique in~$G_t$, otherwise there would be two vertices which are disconnected for more than~$\Delta$ time-steps. Since~$G_t$ has degeneracy at most~$d$, the number of maximal cliques of~$G_t$ is upper-bounded by~$3^{d/3}\cdot |V|$~\cite{ELS13}. Furthermore, the maximum size of a clique is upper-bounded by~$d+1$. Hence, the total number of cliques is upper-bounded by~$3^{d/3}\cdot 2^{d+1}\cdot |V|$. Note that for each of those cliques we have at most one time-maximal $\Delta$-clique starting at time~$t$. Hence, the total number of $\Delta$-cliques is at most~$3^{d/3}\cdot 2^{d+1}\cdot |V|\cdot|T|$.
\end{proof}
Lemma~\ref{lemma:cliquecountdeg} now allows us to bound the running time of Algorithm~\ref{alg:bronkerdelta} using the $\Delta$-slice degeneracy~$d$ of the input graph~$\mathbb{G}$.
\begin{theorem}
\label{thm:fpt}
Let~$\mathbb{G}=(V,E,T)$ be a temporal graph with $\Delta$-slice degeneracy~$d$. Then, \bkd{} enumerates all $\Delta$-cliques of~$\mathbb{G}$ in $O(3^{d/3}\cdot 2^d\cdot |V|\cdot|T|\cdot |E|)$ time.
\end{theorem}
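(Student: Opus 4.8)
The plan is to combine the two immediately preceding results: Lemma~\ref{lemma:cliquecountdeg}, which bounds the number of time-maximal $\Delta$-cliques in terms of the $\Delta$-slice degeneracy, and Theorem~\ref{lemma:runningtime}, which already expresses the running time of \bkd{} as a function of that number. The theorem is essentially a corollary obtained by substituting the former bound into the latter and simplifying the resulting big-$O$ expression.

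First I would let $x$ denote the number of distinct time-maximal $\Delta$-cliques of $\mathbb{G}$. By Theorem~\ref{lemma:runningtime}, \bkd{} enumerates all maximal $\Delta$-cliques in $O(x\cdot|E| + |E|\cdot|T|)$ time. Since $\mathbb{G}$ has $\Delta$-slice degeneracy $d$, Lemma~\ref{lemma:cliquecountdeg} gives $x \le 3^{d/3}\cdot 2^{d+1}\cdot|V|\cdot|T|$. Substituting this upper bound on $x$ into the running time yields
\[
O\!\left(3^{d/3}\cdot 2^{d+1}\cdot|V|\cdot|T|\cdot|E| + |E|\cdot|T|\right).
\]

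It then remains only to simplify this expression. The additive term $|E|\cdot|T|$ is dominated by the first term, since $3^{d/3}\cdot 2^{d+1}\cdot|V| \ge 1$ (using that the vertex set is non-empty, so $|V|\ge 1$), and the constant factor $2$ in $2^{d+1} = 2\cdot 2^d$ is absorbed into the big-$O$. This gives the claimed running time $O(3^{d/3}\cdot 2^d\cdot|V|\cdot|T|\cdot|E|)$.

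I do not expect any genuine obstacle here, as the statement follows directly from two previously established results. The only care required is the routine bookkeeping in the final simplification, namely verifying that the lower-order additive term $|E|\cdot|T|$ and the constant factor arising from $2^{d+1}$ are both absorbed into the big-$O$; this is precisely where the trivial assumption $|V|\ge 1$ is invoked.
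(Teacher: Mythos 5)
Your proof is correct and follows exactly the paper's own argument: the paper likewise derives the theorem by substituting the bound of Lemma~\ref{lemma:cliquecountdeg} into the running time of Theorem~\ref{lemma:runningtime}. Your additional bookkeeping (absorbing the additive $|E|\cdot|T|$ term and the factor $2$ from $2^{d+1}$ into the big-$O$) is a fine, slightly more explicit write-up of the same simplification the paper performs implicitly.
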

\begin{proof}
By Lemma~\ref{lemma:cliquecountdeg} we know that the number of time-maximal $\Delta$-cliques in a temporal graph with $\Delta$-slice degeneracy~$d$ is at most~$3^{d/3}\cdot 2^{d+1}\cdot |V|\cdot |T|$. Hence, by Theorem~\ref{lemma:runningtime}, we get an overall running time in~$O(3^{d/3}\cdot 2^d\cdot |V|\cdot|T|\cdot |E|)$.
\end{proof}
\noindent Note that Theorem~\ref{thm:fpt} implies that enumerating all maximal $\Delta$-cliques is fixed-parameter tractable with respect to the parameter $\Delta$-slice degeneracy. Hence, while NP-hard in general, the problem can be solved efficiently if the $\Delta$-slice degeneracy of the input graph is small. 
\section{Experimental Results}
\label{section:implExp}
\newcommand{\ds}[1]{\textsf{#1}}
\newcommand{\hsa}{\ds{highschool-2011}} %
\newcommand{\hsb}{\ds{highschool-2012}} %
\newcommand{\hsc}{\ds{highschool-2013}} %
\newcommand{\aus}{\ds{as733}} %
\newcommand{\ema}{\ds{karlsruhe}} %
\newcommand{\soc}{\ds{facebook-like}} %
\newcommand{\enr}{\ds{enron}} %
\newcommand{\hos}{\ds{hospital-ward}} %
\newcommand{\hyp}{\ds{hypertext}} %
\newcommand{\infec}{\ds{infectious}} %
\newcommand{\pri}{\ds{primaryschool}} %

In this section we present our experimental results. We give the \dsd{} of
several real-world temporal graphs for several values for $\Delta$. Then we show the behavior of our
implementation of \bkdp{} (Algorithm~\ref{alg:bronkerdeltaPivot}) applied to these real-world temporal graphs and compare it to the algorithm implemented by \citeVLM.

\subsection{Setup and Statistics} 
We now give details of the implementation and the used reference algorithm, and introduce the data
sets we used in the experiments. Furthermore, we explain how the values of $\Delta$ were chosen, give some statistics for the data set, and calculate the \dsd{} of the data sets for the chosen values of~$\Delta$.
 
\paragraph{Implementation.} We implemented\footnote{Code freely available at \url{http://fpt.akt.tu-berlin.de/temporalcliques/} (GNU General Public License).}
\bkdp{} with slight modifications that allow the algorithm to use multiple pivot elements (we refer to this version as \bkdp*). Furthermore, we implemented a simple algorithm to compute the
\dsd{}. Both implementations are in Python~2.7.12 and all experiments were carried out on an Intel Xeon
E5-1620 computer with four cores clocked at 3.6\,GHz and~64\,GB RAM. We
did not utilize the parallel-processing capabilities although it
should be easy to achieve almost linear speed-up with growing number
of cores due to the simple nature of \bkdp{}. The
operating system was Ubuntu 16.04.4 with Linux kernel version 4.4.0-57. We compared
\bkdp* with the algorithm by \citeVLM{}
which was also implemented in Python. %
We modified their source
code\footnote{Code freely available at 
  \url{https://github.com/TiphaineV/delta-cliques} .} by removing the text output in their implementation in order to avoid speed differences. We call their algorithm Algorithm~VLM below. 

\paragraph{Data Sets.} 
  We chose several freely available real-world
temporal graphs aiming for an overview over
the different kinds of contexts in which such graphs arise, that is, an overview over different modes 
of communication and different kinds of
entities and environments in which this communication takes place. However, a focus is on temporal graphs based on physical proximity of individuals, since previous work on \dclq s also focused on these~\cite{viard2015computing,Viard2015Dyno}. The contexts and sources of our test set of temporal graphs are as follows:

\begin{itemize}
\item internet-router communication: \aus\ \cite{leskovec2005graphs},
\item email communication: \ema\ \cite{EMT2011},
\item social-network communication: \soc\ %
  \cite{opsahl2009clustering}, and
\item 
  physical-proximity\footnote{Available at
    \url{http://www.sociopatterns.org/datasets} .} between
  \begin{itemize}
  \item high school students: \hsa, \hsb, \hsc\ \cite{gemmetto2014mitigation,stehle2011high,fournet2014contact},
  \item patients and health-care workers: \hos~\cite{vanhems2013estimating}, 
  \item attendees of the ACM Hypertext 2009 conference: \hyp~\cite{isella2011s}, %
  \item attendees of the Infectious SocioPatterns event: \infec~\cite{isella2011s},
    and %
  \item children and teachers in a primary school: \pri~\cite{stehle2011high}. %
  \end{itemize}
\end{itemize}
Table~\ref{tab:stats} contains the number of vertices, edges, temporal resolution, and lifetime of the corresponding temporal graphs. As a time step we fixed one second for each of the data sets. \citeVLM{}, as the first work on enumerating \dC s, used the data set \hsb\ in their experiments.

\paragraph{Chosen values of $\Delta$.}
In order to
limit the influence of time scales in the data and
to make running times comparable between instances, as well as to
be able to present the results in a unified way, we chose the
$\Delta$-values as follows. We decided on a reference point of the
\emph{edge appearance rate} that is, of the average number of edges
per time step and we fixed a set of $\Delta$-values for this reference
point. For each considered instance we then scaled the reference
$\Delta$-values proportionally to the quotient of the reference edge
appearance rate and the edge appearance rate in the instance.

As the reference point we chose the edge appearance rate of $1/5$
edges per time
step; %
this value was chosen for convenience within the interval of
edge-appearance rates in the studied data sets (see
Table~\ref{tab:stats}). Since, intuitively, the $\Delta$-values of
interest in practice increase exponentially, we chose as reference
$\Delta$-values the numbers $0$ and $5^i$ for $i = 1, 2, \ldots$. As
mentioned, for each instance, these values are then multiplied by the
quotient of edge appearance rates. That is, if the instance has $m$
edges and lifetime~$\ell$, then we scaled the reference
$\Delta$-values by the factor $(1/5)/(m/\ell) = \ell/(5m)$. For
example, for \hsb{} we obtain the $\Delta$-values
$\{0, 80, 404, 2024, 10121, 50606,$ $253034, \ldots\}$. For reference,
recall that each time step in \hsb{} corresponds to one second (a day
has 86,000 seconds and a week has 604,800 seconds). In figures, we
refer to each scaled value of $\Delta$ by $\Delta \sim 5^i$ for some
concrete~$i$. \citet{Viard2015Dyno} used $\Delta$-values according to
$60$ seconds, $15$ minutes, $1$ hour and $3$ hours.

\begin{table}[t]
  \centering
  \caption{Statistics for the data sets used in our experiments.}
  \begin {tabular}{lrrrr}%
\toprule Instance&Vertices&Edges&Resolution&Lifetime (s)\\\midrule %
as733&\pgfutilensuremath {7{,}716}&\pgfutilensuremath {11{,}410{,}810}&1d&\pgfutilensuremath {67{,}824{,}000}\\%
facebook-like&\pgfutilensuremath {1{,}899}&\pgfutilensuremath {59{,}835}&1s&\pgfutilensuremath {16{,}736{,}181}\\%
highschool-2011&\pgfutilensuremath {126}&\pgfutilensuremath {28{,}560}&20s&\pgfutilensuremath {272{,}330}\\%
highschool-2012&\pgfutilensuremath {180}&\pgfutilensuremath {45{,}047}&20s&\pgfutilensuremath {729{,}500}\\%
highschool-2013&\pgfutilensuremath {327}&\pgfutilensuremath {188{,}508}&20s&\pgfutilensuremath {363{,}560}\\%
hospital-ward&\pgfutilensuremath {75}&\pgfutilensuremath {32{,}424}&20s&\pgfutilensuremath {347{,}500}\\%
hypertext&\pgfutilensuremath {113}&\pgfutilensuremath {20{,}818}&20s&\pgfutilensuremath {212{,}340}\\%
infectious&\pgfutilensuremath {10{,}972}&\pgfutilensuremath {415{,}912}&20s&\pgfutilensuremath {6{,}946{,}340}\\%
karlsruhe&\pgfutilensuremath {1{,}870}&\pgfutilensuremath {461{,}661}&1s&\pgfutilensuremath {123{,}837{,}267}\\%
primaryschool&\pgfutilensuremath {242}&\pgfutilensuremath {125{,}773}&20s&\pgfutilensuremath {116{,}900}\\\bottomrule %
\end {tabular}%

  \label{tab:stats}
\end{table}

\begin{table}[t]
  \caption{Static degeneracy and \dsd. Empty cells indicate that the lifetime of the temporal graph is smaller than the scaled $\Delta$-value.}
  \centering
  \begin {tabular}{lcccccc}%
\toprule Instance&Static&$\Delta =0$&$\sim 5^3$&$\sim 5^5$&$\sim 5^7$&$\sim 5^9$\\\midrule %
as733&\pgfutilensuremath {24}&\pgfutilensuremath {13}&\pgfutilensuremath {13}&\pgfutilensuremath {14}&\pgfutilensuremath {15}&\pgfutilensuremath {24}\\%
facebook-like&\pgfutilensuremath {20}&\pgfutilensuremath {1}&\pgfutilensuremath {3}&\pgfutilensuremath {6}&\pgfutilensuremath {19}&\\%
highschool-2011&\pgfutilensuremath {21}&\pgfutilensuremath {4}&\pgfutilensuremath {7}&\pgfutilensuremath {11}&\pgfutilensuremath {19}&\\%
highschool-2012&\pgfutilensuremath {18}&\pgfutilensuremath {4}&\pgfutilensuremath {5}&\pgfutilensuremath {6}&\pgfutilensuremath {12}&\\%
highschool-2013&\pgfutilensuremath {24}&\pgfutilensuremath {4}&\pgfutilensuremath {5}&\pgfutilensuremath {9}&\pgfutilensuremath {14}&\\%
hospital-ward&\pgfutilensuremath {22}&\pgfutilensuremath {4}&\pgfutilensuremath {6}&\pgfutilensuremath {11}&\pgfutilensuremath {18}&\\%
hypertext&\pgfutilensuremath {28}&\pgfutilensuremath {6}&\pgfutilensuremath {7}&\pgfutilensuremath {8}&\pgfutilensuremath {22}&\\%
infectious&\pgfutilensuremath {18}&\pgfutilensuremath {4}&\pgfutilensuremath {9}&\pgfutilensuremath {18}&\pgfutilensuremath {18}&\pgfutilensuremath {18}\\%
karlsruhe&\pgfutilensuremath {33}&\pgfutilensuremath {2}&\pgfutilensuremath {6}&\pgfutilensuremath {9}&\pgfutilensuremath {17}&\pgfutilensuremath {32}\\%
primaryschool&\pgfutilensuremath {47}&\pgfutilensuremath {4}&\pgfutilensuremath {4}&\pgfutilensuremath {10}&\pgfutilensuremath {31}&\\\bottomrule %
\end {tabular}%

  \label{tab:dsd}
\end{table}

\paragraph{\Dsd.}   The \dsdm{} for our set of instances are
shown in Table~\ref{tab:dsd} together with the static degeneracy of
the underlying static graph which has an edge whenever there is an
edge at some time step in the temporal graph. Clearly, as the value of
$\Delta$ increases, the \dsd{} approaches---and is upper-bounded
by---the static degeneracy. The static degeneracy of our
instances is small in comparison with the size of the graph. This
falls in line with the analysis by \citet{ELS13} for many real-world
graphs. Moreover, for many practically relevant values of $\Delta$ the
\dsd{} is still significantly smaller. For example, in the instance \hsb, the
scaled value of $\Delta$ corresponding to $5^3$ equals~2204 time steps
(seconds) and the corresponding \dsd{} is~5. This indicates that \dsd\ can be a very promising (that is, also small) parameter when designing and analyzing algorithms for temporal graphs. 

We computed the \dsdm{} using a straightforward approach. We
iteratively computed %
for each $\Delta$-long time interval the graph induced by the edges in
that time interval. For each of these graphs we computed the static
degeneracy using an implementation from the NetworkX python
library~\cite{HSS08}. This approach is rather inefficient. For
example, it took about seven hours to compute the \dsd\ for \ema\ with $\Delta \sim 5^5$
(equating to a $\Delta$ value of about two hours).

\subsection{Results and Running Times}
  We now study the efficiency of \bkdp*,
evaluate pivoting strategies, and compare the result to
Algorithm~VLM.

\paragraph{Pivoting.}   Generally we observed that pivoting plays a
negligible role when $\Delta$ is small compared to the overall
lifetime of the graph%
, that is, when $\Delta$ is less than roughly one
third of the lifetime. In this case, pivoting has almost no effect on
the running time and the number of recursive calls. For larger values of $\Delta$, however, pivoting can make a clear difference depending on the type of temporal graph.

We tested five strategies for selecting a set of pivots from $P$ in
\bkdp*. Call a set of pivots is \emph{maximal} if the interval of each
element from~$P$ overlaps with at least one pivot. We tested the following variants of pivot sets: 
\begin{enumerate}
\item[1A)] a single arbitrary pivot,
\item[1G)] a single pivot maximizing the number of elements removed from~$P$,
\item[MA)] an arbitrary maximal set of pivots (pivots picked one-by-one arbitrarily),
\item[MG)] a maximal set of pivots (pivots picked one by one according to the
  maximum number of further elements removed from~$P$), and
\item[MM)] a set of pivots which maximizes the number of
  elements removed from~$P$. 
\end{enumerate}
Clearly, each strategy has its own
trade-off between the time needed to compute the pivots and the
possible reduction in recursive calls.

\begin{figure}[t]
  \centering
  \includegraphics{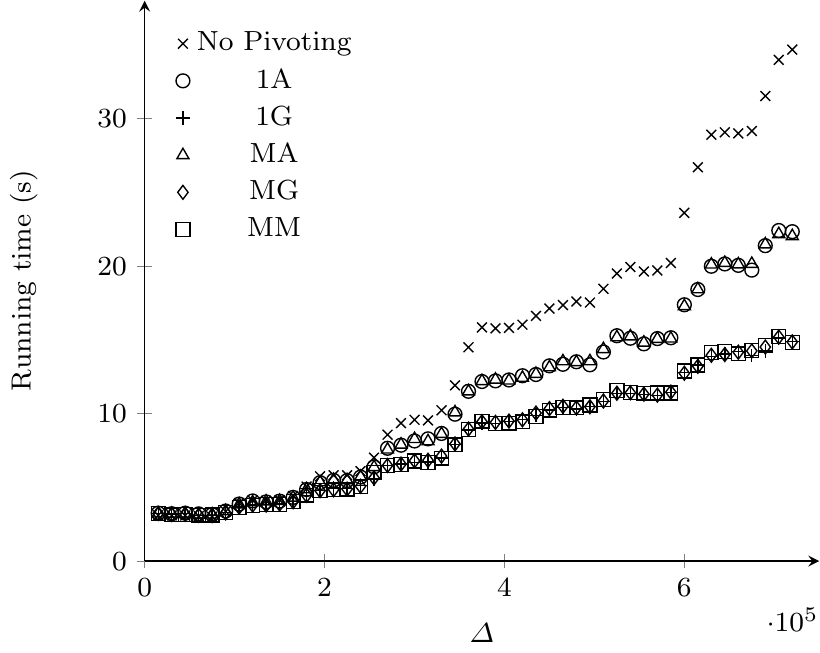}
  \caption{Running time for different pivoting strategies on \hsb{}.}
  \label{fig:pivot-run-hsb}
\end{figure}

\begin{figure}[t]
  \centering
  \includegraphics{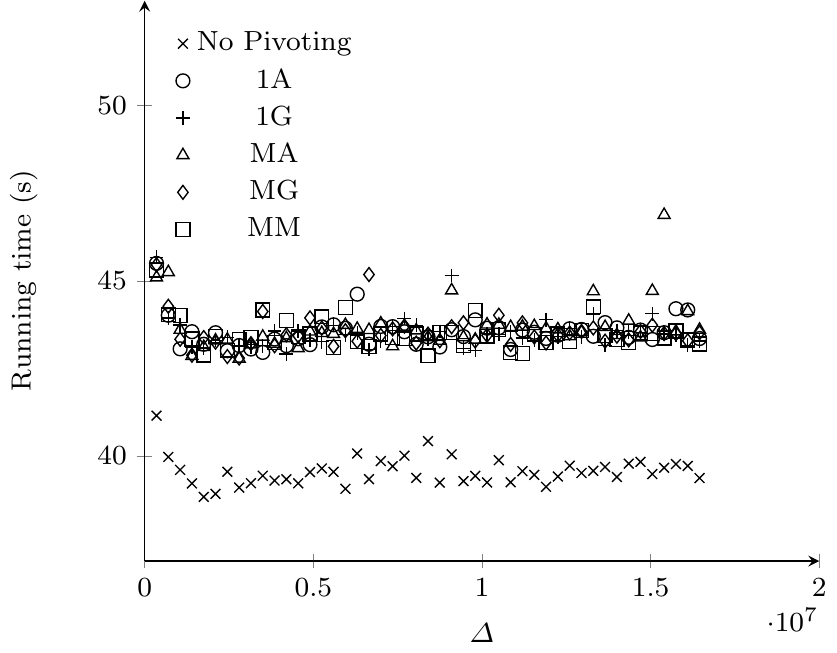}
  \caption{Running time for different pivoting strategies on \soc{}.}
  \label{fig:pivot-run-soc}
\end{figure}

Running times are given for \hsb{} in Figure~\ref{fig:pivot-run-hsb}
with $\Delta$ between~15,000 and 725,000. We note that running times
for some very small values of $\Delta$ below 15,000 are larger than
30\,s and hence do not fit in the chart. We consider this phenomenon
more closely below. For $\Delta \leq {}$15,000 there is no appreciable
difference between the pivoting strategies. In terms of relative
difference between pivoting strategies, \hsb{} seems to be a
representative example. Strategies~1G and~MG seem to be the best
options: they do not incur much overhead compared to no pivoting for
small~$\Delta$ and yield strong running time improvements for larger
$\Delta$. In comparison to no pivoting, strategies~1G and~MG achieve a
60\,\% reduction in recursive calls for $\Delta$-values of around
$7 \cdot 10^6$ in \hsb{}. Since the running times of strategy~1G
and~MG are so close to each other we conclude that in most cases there
is only one important pivot that should be selected. We were
surprised to see that maximizing the overall number of elements
removed from~$P$ via the pivot set (strategy~MM) results in slightly
worse running times and slightly larger numbers of recursive
calls%
. The number of elements that are removed by a
pivot in one recursive call of the algorithm ranges between one and 14
while many of the calls remove two to four elements%
. Notice that occasional reduction by ten or more
elements can substantially decrease the search space, because in general its size
depends exponentially on the size of~$P$.

Figure~\ref{fig:pivot-run-soc} shows running times for \soc{}. On this
graph, pivoting seldom removes more than one element from the
candidate set~$P$ in one call of the recursive procedure. Hence, for this instance, pivoting
mainly incurs overhead for computing the pivots, but do not
substantially decrease the search space. We consequently observe about
10\,\% slower running times, regardless of the pivoting strategy. 

In conclusion, strategy~1G offers the best trade-off between additional
running time spent with computing the pivot(s) and running time saved
due to decreased number of recursive calls. Overall, the the possible
benefits seem to outweigh the overhead incurred by pivoting on some
instances. All remaining experiments were thus carried out with
strategy~1G.

\paragraph{Running Times and Comparison with Algorithm VLM.}
\begin{table*}[!p]%
  \caption{$\Delta$-clique statistics and running times: $|\cal C|$ denotes the number of maximal $\Delta$-cliques, $s$ denotes the maximum $\Delta$-clique size, $\ell$ the maximum $\Delta$-clique lifetime divided by~$10^5$, $t_\text{BKD}$ and $t_\text{VLM}$ denote the running time in seconds of \bkd* and Algorithm VLM, respectively. Empty cells represent an exceeded running time limit of one hour.}%
  \centering%
  \begin {tabular}{lrrrrr}%
\toprule & \multicolumn {5}{c}{$\Delta = 0$} \\ \cmidrule (r){2-6}Instance&$|\cal C|$&$s$&$\ell $&$t_\text {BKD}$&$t_\text {VLM}$\\\midrule %
facebook-like&\pgfutilensuremath {61{,}648}&\pgfutilensuremath {2}&\pgfutilensuremath {1{,}674}&\pgfutilensuremath {169}&\pgfutilensuremath {12}\\%
highschool-2011&\pgfutilensuremath {26{,}510}&\pgfutilensuremath {5}&\pgfutilensuremath {27}&\pgfutilensuremath {131}&\pgfutilensuremath {7}\\%
highschool-2012&\pgfutilensuremath {42{,}285}&\pgfutilensuremath {5}&\pgfutilensuremath {73}&\pgfutilensuremath {248}&\pgfutilensuremath {12}\\%
highschool-2013&\pgfutilensuremath {172{,}362}&\pgfutilensuremath {5}&\pgfutilensuremath {36}&\pgfutilensuremath {1{,}952}&\pgfutilensuremath {118}\\%
hospital-ward&\pgfutilensuremath {27{,}910}&\pgfutilensuremath {5}&\pgfutilensuremath {35}&\pgfutilensuremath {370}&\pgfutilensuremath {14}\\%
hypertext&\pgfutilensuremath {19{,}150}&\pgfutilensuremath {6}&\pgfutilensuremath {21}&\pgfutilensuremath {85}&\pgfutilensuremath {5}\\%
infectious&\pgfutilensuremath {349{,}787}&\pgfutilensuremath {5}&\pgfutilensuremath {695}&\pgfutilensuremath {1{,}530}&\pgfutilensuremath {2{,}515}\\%
karlsruhe&&&&&\pgfutilensuremath {1{,}494}\\%
primaryschool&\pgfutilensuremath {107{,}121}&\pgfutilensuremath {5}&\pgfutilensuremath {12}&\pgfutilensuremath {995}&\pgfutilensuremath {147}\\\bottomrule %
\end {tabular}%
\\
  \vspace{.3cm}
  \begin {tabular}{lrrrrr}%
\toprule & \multicolumn {5}{c}{$\Delta \sim 5^3$}\\ \cmidrule (r){2-6}Instance&$|\cal C|$&$s$&$\ell $&$t_\text {BKD}$&$t_\text {VLM}$\\\midrule %
facebook-like&\pgfutilensuremath {33{,}876}&\pgfutilensuremath {4}&\pgfutilensuremath {1{,}675}&\pgfutilensuremath {70}&\pgfutilensuremath {1{,}141}\\%
highschool-2011&\pgfutilensuremath {7{,}394}&\pgfutilensuremath {7}&\pgfutilensuremath {27}&\pgfutilensuremath {5}&\pgfutilensuremath {153}\\%
highschool-2012&\pgfutilensuremath {9{,}501}&\pgfutilensuremath {6}&\pgfutilensuremath {73}&\pgfutilensuremath {8}&\pgfutilensuremath {236}\\%
highschool-2013&\pgfutilensuremath {57{,}121}&\pgfutilensuremath {6}&\pgfutilensuremath {36}&\pgfutilensuremath {178}&\pgfutilensuremath {1{,}990}\\%
hospital-ward&\pgfutilensuremath {8{,}694}&\pgfutilensuremath {7}&\pgfutilensuremath {35}&\pgfutilensuremath {15}&\pgfutilensuremath {226}\\%
hypertext&\pgfutilensuremath {6{,}345}&\pgfutilensuremath {7}&\pgfutilensuremath {21}&\pgfutilensuremath {8}&\pgfutilensuremath {107}\\%
infectious&\pgfutilensuremath {134{,}787}&\pgfutilensuremath {9}&\pgfutilensuremath {695}&\pgfutilensuremath {1{,}195}&\\%
karlsruhe&&&&&\\%
primaryschool&\pgfutilensuremath {83{,}314}&\pgfutilensuremath {9}&\pgfutilensuremath {12}&\pgfutilensuremath {83}&\\\bottomrule %
\end {tabular}%
\\
  \vspace{.3cm}
  \begin {tabular}{lrrrrr}%
\toprule & \multicolumn {5}{c}{$\Delta \sim 5^5$}\\ \cmidrule (r){2-6}Instance&$|\cal C|$&$s$&$\ell $&$t_\text {BKD}$&$t_\text {VLM}$\\\midrule %
facebook-like&\pgfutilensuremath {23{,}247}&\pgfutilensuremath {5}&\pgfutilensuremath {1{,}709}&\pgfutilensuremath {47}&\\%
highschool-2011&\pgfutilensuremath {7{,}760}&\pgfutilensuremath {10}&\pgfutilensuremath {28}&\pgfutilensuremath {4}&\\%
highschool-2012&\pgfutilensuremath {7{,}536}&\pgfutilensuremath {7}&\pgfutilensuremath {75}&\pgfutilensuremath {4}&\pgfutilensuremath {1{,}365}\\%
highschool-2013&\pgfutilensuremath {29{,}752}&\pgfutilensuremath {8}&\pgfutilensuremath {37}&\pgfutilensuremath {23}&\\%
hospital-ward&\pgfutilensuremath {10{,}869}&\pgfutilensuremath {12}&\pgfutilensuremath {36}&\pgfutilensuremath {7}&\\%
hypertext&\pgfutilensuremath {7{,}459}&\pgfutilensuremath {7}&\pgfutilensuremath {23}&\pgfutilensuremath {4}&\pgfutilensuremath {2{,}193}\\%
infectious&\pgfutilensuremath {163{,}162}&\pgfutilensuremath {16}&\pgfutilensuremath {697}&\pgfutilensuremath {1{,}277}&\\%
karlsruhe&\pgfutilensuremath {235{,}684}&\pgfutilensuremath {9}&\pgfutilensuremath {12{,}417}&\pgfutilensuremath {1{,}235}&\\%
primaryschool&\pgfutilensuremath {508{,}430}&\pgfutilensuremath {20}&\pgfutilensuremath {15}&\pgfutilensuremath {890}&\\\bottomrule %
\end {tabular}%

  \label{tab:holist}
\end{table*}
We experimented with \bkdp* (Algorithm~\ref{alg:bronkerdeltaPivot}) using pivoting strategy~1G and with 
Algorithm~VLM for $\Delta = 0$ and $\Delta \sim 5^i$ with $i = 1, 3, 5, 7, 9$
(where the lifetime allowed such values of $\Delta$). An excerpt of
the results is given in Table~\ref{tab:holist}. Clearly, larger
instances with more vertices or edges demand a longer running
time. However, even large instances like \infec{} can still be solved
within one hour.

From our theoretical results in
Section~\ref{section:bronKerboschDelta} we expected that the running
time of \bkdp* increases exponentially with
growing \dsd{}. As the \dsd{} grows very slowly with
increasing~$\Delta$ (see Table~\ref{tab:dsd}), we expected a
corresponding moderate growth in running time with respect
to~$\Delta$. For larger $\Delta$, this is consistent with the
experimental results, as shown in Figures~\ref{fig:pivot-run-hsb},~\ref{fig:pivot-run-soc} and
Table~\ref{tab:holist}. However, for (very) small $\Delta$ we often observe an
initial spike in the running time (and number of $\Delta$-cliques)
which then subsides. This is also shown in Figure~\ref{fig:smalld-hsb}.
A possible explanation for this spike is that, for small~$\Delta$, the $\Delta$-neighborhood of many vertices becomes very
fragmented, leading to large candidate sets~$P$ in the algorithm (although
the size of $P$ is still linear in the input size for constant
\dsd). Furthermore, if $\Delta$ is small, then many singleton edges may
form maximal $\Delta$-cliques themselves. These $\Delta$-cliques then
get taken up into larger maximal $\Delta$-cliques when $\Delta$
increases, which decreases the number of $\Delta$-cliques and running
times for \bkdp*.

On \soc\ our algorithm notably is comparably efficient given the
relatively large size (see
Figures~\ref{fig:pivot-run-soc} and~\ref{fig:smalld-soc}). Furthermore, the number of \dclq s
does not seem to vary strongly with changing values of~$\Delta$. These
two facts may hint at some special structure that is present in temporal graphs based on online
social networks, in addition to small \dsd.

  Algorithm~VLM is usually faster than \bkdp* for small
values of~$\Delta$ below the $\Delta \sim 5^3$ threshold. Starting from there,
however, \bkdp* outperforms
Algorithm VLM with running times smaller by at least one order of
magnitude and up to three orders of magnitude (see
Table~\ref{tab:holist}). In terms of main memory, 385\,MB is the
maximum used by \bkdp* over all solved
instances, attained on \infec{} for $\Delta = 0$. On this instance,
Algorithm VLM uses~494\,MB and often more than 1\,GB.

\begin{figure}[t]
  \centering
  \includegraphics{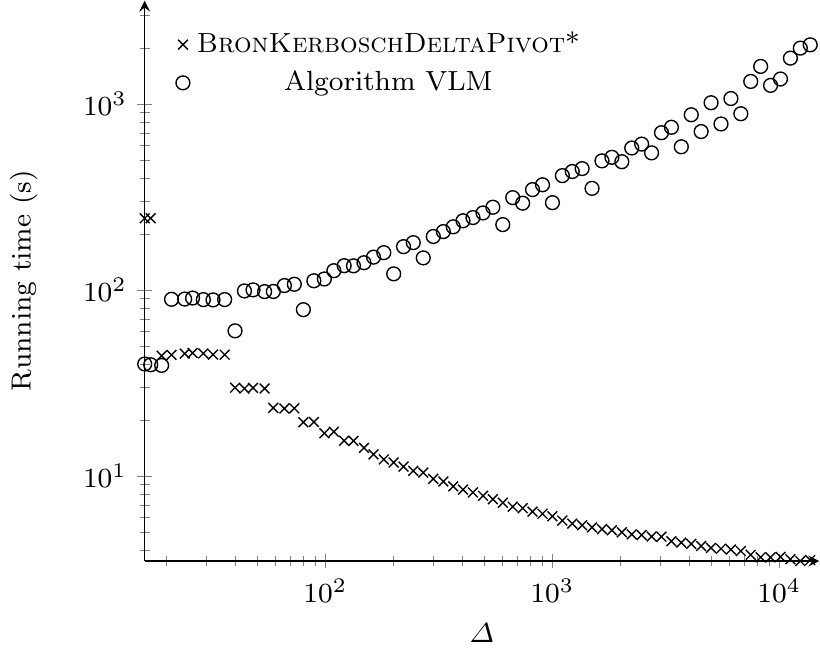}
  \caption{Running time vs.\ $\Delta$ on \hsb{}.}
  \label{fig:smalld-hsb}
\end{figure}

\begin{figure}[t]
  \centering
  \includegraphics{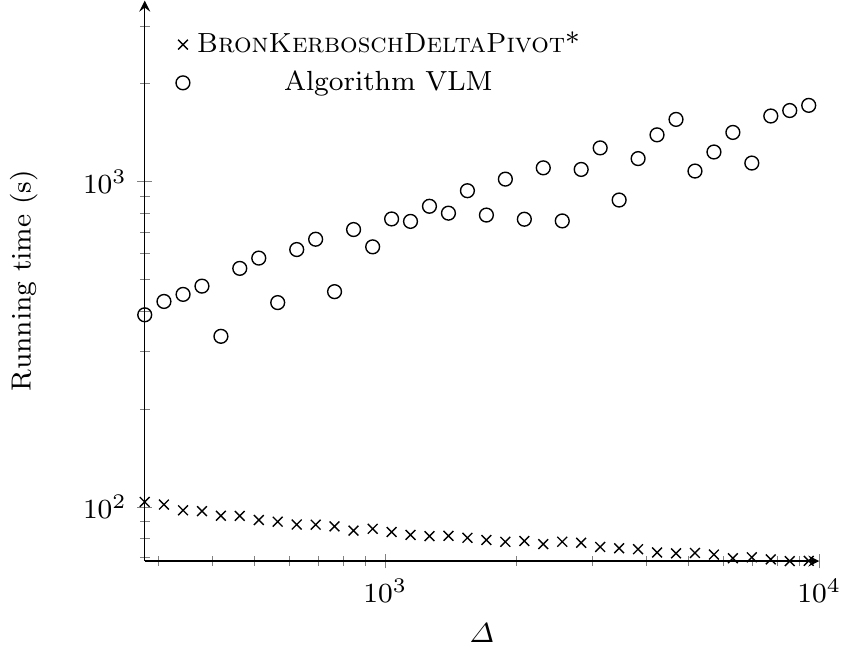}
  \caption{Running time vs.\ $\Delta$ on \soc{}.}
  \label{fig:smalld-soc}
\end{figure}

Finally we mention that, when increasing the time limit to six hours,
\bkdp* can solve all instances of \ema{}
for $\Delta = 0$ and $\Delta \sim 5^i$ for $i = 1, 3, 5, 7, 9$ wherein the last value of~$\Delta$
involves enumerating $43 \cdot 10^6$ maximal $\Delta$-cliques.

\section{Conclusion and Outlook}
\label{sec:conclusion}
We studied the algorithmic complexity of enumerating $\Delta$-cliques in temporal graphs. We adapted the Bron-Kerbosch algorithm~(\cite{bron1973algorithm}), including the procedure of pivoting to reduce the number of recursion calls, to the temporal setting and provided a theoretical analysis. For the theoretical analysis, we formalized and employed the concept of $\Delta$-slice degeneracy which may be a useful parameter when analyzing problems in sparse temporal graphs.

In experiments on real-world data sets, we showed that our algorithm is notably faster than the first approach for enumerating all maximal $\Delta$-cliques in temporal graphs due to \citet{Viard2015Dyno,viard2015computing}. Our experimental results further reveal that pivoting can notably decrease the running time for large values of~$\Delta$. Furthermore, we measured the $\Delta$-slice degeneracy for different $\Delta$-values and showed that it is reasonably small in many real-world data sets. 

As to future research, an algorithmic challenge is to find a more efficient way to compute the \dsd\ of a given temporal graph, perhaps via different characterizations as in the case of static graphs. See \cite{ELS13} for an account of several equivalent definitions of the degeneracy of a static graph.
Regarding the adapted version of the Bron-Kerbosch algorithm, our theoretical analysis (based on the $\Delta$-slice degeneracy parameter) of the running time still leaves room for improvement. In particular, we leave the impact of pivoting on the running time upper bound as an open question for future research. It furthermore makes sense to try and implement further improved branching rules on top of pivoting. This was also successful for the static Bron-Kerbosch algorithm \cite{Nau16}. Another interesting question is whether an analogue to the degeneracy ordering can be defined in the temporal setting and, if so, whether it can be used to further improve the algorithm.

\paragraph*{Acknowledgements.}
  Anne-Sophie Himmel, Hendrik Molter and Manuel Sorge were partially supported by DFG, project DAPA (NI~369/12). Manuel Sorge  gratefully acknowledges support by the People Programme (Marie Curie Actions) of the European Union's Seventh Framework Programme (FP7/2007-2013) under REA grant agreement number 631163.11 and by the Israel Science Foundation (grant no. 551145/14).
  
  We are grateful to two anonymous SNAM reviewers whose feedback helped to significantly improve the presentation and to eliminate some bugs and inconsistencies.

\bibliographystyle{abbrvnat}      %
\bibliography{literature}   %

\end{document}